\documentclass[sigconf]{}

\usepackage{tikz}
\usepackage{amsthm}
\usepackage{amsmath}
\usepackage{bm}
\usepackage{thmtools,thm-restate}
\usepackage{enumitem}
\usepackage{pifont}
\usepackage{array}
\usepackage{float}
\usepackage{color}
\usepackage{caption}
\usepackage{subcaption}
\usepackage{multirow}
\usepackage{booktabs}
\usepackage{graphicx}
\usepackage{colortbl}
\usepackage{pythonhighlight}
\usepackage{hyperref}
\usepackage{comment}
\usepackage{lipsum}
\usepackage{bbm}
\usepackage{algorithm}
\usepackage[noend]{algpseudocode}
\usepackage{hyperref}
\usepackage{xcolor}
\usepackage{anyfontsize}
\urlstyle{same}
\usepackage{makecell}
\usepackage{mathtools}
\usepackage[T1]{fontenc}
\usepackage[font=small,labelfont=bf,tableposition=top]{caption}
\usepackage{balance}

\usepackage{draftwatermark}
\SetWatermarkText{Preprint}
\SetWatermarkScale{0.2}
\SetWatermarkAngle{0}
\SetWatermarkLightness{0.80}
\SetWatermarkVerCenter{0.04\paperheight}
\SetWatermarkHorCenter{0.09\paperwidth}

\hypersetup{
colorlinks=true,
linkcolor=cyan,
citecolor=red,
urlcolor=blue}
\usepackage{caption}

\newtheorem{theorem}{Theorem}
\newtheorem{lemma}[theorem]{Lemma} 
 
\newtheorem{remark}[theorem]{Remark}

\newtheorem{definition}[theorem]{Definition}

\newcommand{\pluseq}{\mathrel{{+}{=}}}

\newcolumntype{P}[1]{>{\centering\arraybackslash}p{#1}}
\newcommand{\xingzhiswallow}[1]{{}}

\renewcommand{\paragraph}[1]{\textbf{\noindent{#1}}}

\newcommand*\colourcheck[1]{%
  \expandafter\newcommand\csname #1check\endcsname{\textcolor{#1}{\ding{52}}}%
}
\newcommand*\colourxmark[1]{%
  \expandafter\newcommand\csname #1xmark\endcsname{\textcolor{#1}{\ding{55}}}%
}
\colourcheck{blue}
\colourxmark{red}
\AtBeginDocument{%
  \providecommand\BibTeX{{%
    \normalfont B\kern-0.5em{\scshape i\kern-0.25em b}\kern-0.8em\TeX}}}

\xingzhiswallow{
    \setcopyright{acmcopyright}
    \copyrightyear{2018}
    \acmYear{2018}
    \acmDOI{10.1145/1122445.1122456}
    
    \acmConference[KDD '22]{KDD '22: ACM Symposium on Neural
      Gaze Detection}{June 03--05, 2018}{Washington DC}
    \acmBooktitle{Woodstock '18: ACM Symposium on Neural Gaze Detection,
      June 03--05, 2018, Washington DC}
    \acmPrice{15.00}
    \acmISBN{978-1-4503-XXXX-X/18/06}
}
 
\copyrightyear{2022}
\acmYear{2022}
\setcopyright{acmcopyright}\acmConference[KDD '22]{Proceedings of the 28th ACM SIGKDD Conference on Knowledge Discovery and Data Mining}{August 14--18, 2022}{Washington, DC, USA}
\acmBooktitle{Proceedings of the 28th ACM SIGKDD Conference on Knowledge Discovery and Data Mining (KDD '22), August 14--18, 2022, Washington, DC, USA} \acmPrice{15.00}
\acmDOI{10.1145/3534678.3539389}
\acmISBN{978-1-4503-9385-0/22/08}

\settopmatter{printacmref=true}



\begin{document}

\title[Subset Node Anomaly Tracking over Large Dynamic Graphs]{Subset Node Anomaly Tracking \\over Large Dynamic Graphs}


\author{Xingzhi Guo}
\email{xingzguo@cs.stonybrook.edu}
\affiliation{%
  \institution{Stony Brook University}
  \city{Stony Brook}
  \country{USA}
}

\author{Baojian Zhou}
\authornote{Corresponding author}
\email{bjzhou@fudan.edu.cn}
\affiliation{%
  \institution{Fudan University}
  \city{Shanghai}
  \country{China}
}

\author{Steven Skiena}
\email{skiena@cs.stonybrook.edu}
\affiliation{%
  \institution{Stony Brook University}
  \city{Stony Brook}
  \country{USA}
}

\renewcommand{\shortauthors}{Xingzhi Guo, Baojian Zhou and Steven Skiena}

\begin{abstract}

Tracking a targeted subset of nodes in an evolving graph is important for many real-world applications. Existing methods typically focus on identifying anomalous edges or finding anomaly graph snapshots in a stream way. However, edge-oriented methods cannot quantify how individual nodes change over time while others need to maintain representations of the whole graph all the time, thus computationally inefficient.

This paper proposes \textsc{DynAnom}, an efficient framework to quantify the changes and localize per-node anomalies over large dynamic weighted-graphs. Thanks to recent advances in dynamic representation learning based on Personalized PageRank, \textsc{DynAnom} is 1) \textit{efficient}: the time complexity is linear to the number of edge events and independent of node size of the input graph; 2) \textit{effective}: \textsc{DynAnom} can successfully track topological changes reflecting real-world anomaly; 3) \textit{flexible}: different type of anomaly score functions can be defined for various applications. Experiments demonstrate these properties on both benchmark graph datasets and a new large real-world dynamic graph. Specifically, an instantiation method based on \textsc{DynAnom} achieves the accuracy of 0.5425 compared with 0.2790, the best baseline, on the task of node-level anomaly localization while running 2.3 times faster than the baseline. We present a real-world case study and further demonstrate the usability of \textsc{DynAnom} for anomaly discovery over large-scale graphs.

\end{abstract}
\begin{CCSXML}
<ccs2012>
   <concept>
       <concept_id>10010147</concept_id>
       <concept_desc>Computing methodologies</concept_desc>
       <concept_significance>500</concept_significance>
       </concept>
   <concept>
       <concept_id>10002951.10003227.10003351.10003446</concept_id>
       <concept_desc>Information systems~Data stream mining</concept_desc>
       <concept_significance>500</concept_significance>
       </concept>
 </ccs2012>
\end{CCSXML}

\ccsdesc[500]{Computing methodologies}
\ccsdesc[500]{Information systems~Data stream mining}

\keywords{Dynamic graph; Anomaly detection; Personalized PageRank }

\maketitle

\section{Introduction}
\begin{figure}[t]
\centering
    \subfloat[Dynamic Network] {
        \begin{tikzpicture}[x=0.75pt,y=0.75pt,yscale=-0.65,xscale=0.60]
        
        \draw   (15,27.2) .. controls (15,15.49) and (24.49,6) .. (36.2,6) -- (167.8,6) .. controls (179.51,6) and (189,15.49) .. (189,27.2) -- (189,90.8) .. controls (189,102.51) and (179.51,112) .. (167.8,112) -- (36.2,112) .. controls (24.49,112) and (15,102.51) .. (15,90.8) -- cycle ;
        \draw   (49.41,32.59) .. controls (49.41,27.3) and (54.49,23.02) .. (60.74,23.02) .. controls (67,23.02) and (72.07,27.3) .. (72.07,32.59) .. controls (72.07,37.88) and (67,42.16) .. (60.74,42.16) .. controls (54.49,42.16) and (49.41,37.88) .. (49.41,32.59) -- cycle ;
        \draw  [fill={rgb, 255:red, 0; green, 0; blue, 255 }  ,fill opacity=1 ] (84.66,57.41) .. controls (84.66,52.12) and (89.74,47.84) .. (96,47.84) .. controls (102.25,47.84) and (107.33,52.12) .. (107.33,57.41) .. controls (107.33,62.7) and (102.25,66.98) .. (96,66.98) .. controls (89.74,66.98) and (84.66,62.7) .. (84.66,57.41) -- cycle ;
        \draw   (136,33.82) .. controls (136,28.53) and (141.08,24.25) .. (147.33,24.25) .. controls (153.59,24.25) and (158.67,28.53) .. (158.67,33.82) .. controls (158.67,39.11) and (153.59,43.39) .. (147.33,43.39) .. controls (141.08,43.39) and (136,39.11) .. (136,33.82) -- cycle ;
        \draw   (128.75,96.84) .. controls (128.75,91.55) and (133.82,87.26) .. (140.08,87.26) .. controls (146.34,87.26) and (151.41,91.55) .. (151.41,96.84) .. controls (151.41,102.12) and (146.34,106.41) .. (140.08,106.41) .. controls (133.82,106.41) and (128.75,102.12) .. (128.75,96.84) -- cycle ;
        \draw   (43.96,93.22) .. controls (43.96,87.93) and (49.03,83.65) .. (55.29,83.65) .. controls (61.55,83.65) and (66.62,87.93) .. (66.62,93.22) .. controls (66.62,98.51) and (61.55,102.79) .. (55.29,102.79) .. controls (49.03,102.79) and (43.96,98.51) .. (43.96,93.22) -- cycle ;
        \draw    (60.74,42.16) -- (55.29,83.65) ;
        \draw    (63.53,87) -- (88.39,64) ;
        \draw    (107.33,57.41) -- (135.5,36.39) ;
        \draw    (72.07,32.59) -- (136,33.82) ;
        \draw [fill={rgb, 255:red, 100; green, 100; blue, 100 }  ,fill opacity=1 ] [dash pattern={on 0.84pt off 2.51pt}]  (50.51,27) -- (24.47,10) ;
        \draw [fill={rgb, 255:red, 100; green, 100; blue, 100 }  ,fill opacity=1 ] [dash pattern={on 0.84pt off 2.51pt}]  (51.69,38) -- (16.18,49) ;
        \draw [fill={rgb, 255:red, 100; green, 100; blue, 100 }  ,fill opacity=1 ] [dash pattern={on 0.84pt off 2.51pt}]  (43.96,93.22) -- (15,90.8) ;
        \draw [fill={rgb, 255:red, 100; green, 100; blue, 100 }  ,fill opacity=1 ] [dash pattern={on 0.84pt off 2.51pt}]  (56.43,112) -- (55.29,100.79) ;
        \draw [fill={rgb, 255:red, 100; green, 100; blue, 100 }  ,fill opacity=1 ] [dash pattern={on 0.84pt off 2.51pt}]  (174.8,109) -- (151.41,96.84) ;
        \draw [fill={rgb, 255:red, 100; green, 100; blue, 100 }  ,fill opacity=1 ] [dash pattern={on 0.84pt off 2.51pt}]  (153.49,7) -- (147.33,24.25) ;
        \draw [fill={rgb, 255:red, 100; green, 100; blue, 100 }  ,fill opacity=1 ] [dash pattern={on 0.84pt off 2.51pt}]  (185.45,47) -- (157.04,40) ;
        \draw    (147.33,43.39) -- (140.08,87.26) ;
        \draw   (15,167.2) .. controls (15,155.49) and (24.49,146) .. (36.2,146) -- (167.8,146) .. controls (179.51,146) and (189,155.49) .. (189,167.2) -- (189,230.8) .. controls (189,242.51) and (179.51,252) .. (167.8,252) -- (36.2,252) .. controls (24.49,252) and (15,242.51) .. (15,230.8) -- cycle ;
        \draw   (50.13,172.59) .. controls (50.13,167.3) and (55.31,163.02) .. (61.7,163.02) .. controls (68.08,163.02) and (73.26,167.3) .. (73.26,172.59) .. controls (73.26,177.88) and (68.08,182.16) .. (61.7,182.16) .. controls (55.31,182.16) and (50.13,177.88) .. (50.13,172.59) -- cycle ;
        \draw  [fill={rgb, 255:red, 0; green, 0; blue, 255 }  ,fill opacity=1 ] (86.12,197.41) .. controls (86.12,192.12) and (91.29,187.84) .. (97.68,187.84) .. controls (104.07,187.84) and (109.25,192.12) .. (109.25,197.41) .. controls (109.25,202.7) and (104.07,206.98) .. (97.68,206.98) .. controls (91.29,206.98) and (86.12,202.7) .. (86.12,197.41) -- cycle ;
        \draw   (138.52,173.82) .. controls (138.52,168.53) and (143.7,164.25) .. (150.09,164.25) .. controls (156.48,164.25) and (161.66,168.53) .. (161.66,173.82) .. controls (161.66,179.11) and (156.48,183.39) .. (150.09,183.39) .. controls (143.7,183.39) and (138.52,179.11) .. (138.52,173.82) -- cycle ;
        \draw   (131.12,236.84) .. controls (131.12,231.55) and (136.3,227.26) .. (142.69,227.26) .. controls (149.08,227.26) and (154.25,231.55) .. (154.25,236.84) .. controls (154.25,242.12) and (149.08,246.41) .. (142.69,246.41) .. controls (136.3,246.41) and (131.12,242.12) .. (131.12,236.84) -- cycle ;
        \draw   (44.56,233.22) .. controls (44.56,227.93) and (49.74,223.65) .. (56.13,223.65) .. controls (62.52,223.65) and (67.69,227.93) .. (67.69,233.22) .. controls (67.69,238.51) and (62.52,242.79) .. (56.13,242.79) .. controls (49.74,242.79) and (44.56,238.51) .. (44.56,233.22) -- cycle ;
        \draw    (61.7,182.16) -- (56.13,223.65) ;
        \draw    (64.54,227) -- (89.92,204) ;
        \draw [color={rgb, 255:red, 144; green, 19; blue, 254 }  ,draw opacity=1 ][fill={rgb, 255:red, 144; green, 19; blue, 254 }  ,fill opacity=1 ][line width=3]    (109.25,197.41) -- (140.67,180) ;
        \draw    (73.26,172.59) -- (138.52,173.82) ;
        \draw  [dash pattern={on 0.84pt off 2.51pt}]  (51.25,167) -- (24.67,150) ;
        \draw  [dash pattern={on 0.84pt off 2.51pt}]  (52.46,178) -- (16.21,189) ;
        \draw  [dash pattern={on 0.84pt off 2.51pt}]  (44.56,233.22) -- (15,230.8) ;
        \draw  [dash pattern={on 0.84pt off 2.51pt}]  (57.29,252) -- (56.13,240.79) ;
        \draw  [dash pattern={on 0.84pt off 2.51pt}]  (178.13,249) -- (154.25,236.84) ;
        \draw  [dash pattern={on 0.84pt off 2.51pt}]  (156.38,147) -- (150.09,164.25) ;
        \draw  [dash pattern={on 0.84pt off 2.51pt}]  (189,187) -- (160,180) ;
        \draw    (150.09,183.39) -- (142.69,227.26) ;
        \draw [color={rgb, 255:red, 144; green, 19; blue, 254 }  ,draw opacity=1 ][fill={rgb, 255:red, 0; green, 0; blue, 0 }  ,fill opacity=1 ]   (105.63,205) -- (132.21,231) ;
        \draw [line width=1.5]    (34.32,112) -- (34.32,143) ;
        \draw [shift={(34.32,146)}, rotate = 270] [fill={rgb, 255:red, 0; green, 0; blue, 0 }  ][line width=0.5]  [draw opacity=0] (8.93,-4.29) -- (0,0) -- (8.93,4.29) -- cycle    ;

        \draw (88.58,48) node [anchor=north west][inner sep=0.75pt]  [color={rgb, 255:red, 255; green, 255; blue, 255 }  ,opacity=1 ] [align=left] {u};
        \draw (140.66,25) node [anchor=north west][inner sep=0.75pt]   [align=left] {v};
        \draw (90.27,188) node [anchor=north west][inner sep=0.75pt]  [color={rgb, 255:red, 255; green, 255; blue, 255 }  ,opacity=1 ] [align=left] {u};
        \draw (143.44,165) node [anchor=north west][inner sep=0.75pt]   [align=left] {v};
        \draw (92.59,5.4) node [anchor=north west][inner sep=0.75pt]    {$\mathcal{G}_{t}$};
        \draw (94.67,144.4) node [anchor=north west][inner sep=0.75pt]    {$\mathcal{G}_{t+1}$};
        \draw (133.50,88) node [anchor=north west][inner sep=0.75pt]   [align=left] {k};
        \draw (135.00,226) node [anchor=north west][inner sep=0.75pt]   [align=left] {k};
        \draw (45,120.4) node [anchor=north west][inner sep=0.75pt]    {\tiny $ \Delta E =\{( u,v,1) ,( u,k,1)\}$};
        \end{tikzpicture}
    \label{fig:dyn-network-example}
    }
    \hfill
    \subfloat[Application: Quantify changes of Biden] {\includegraphics[width=0.62\linewidth]{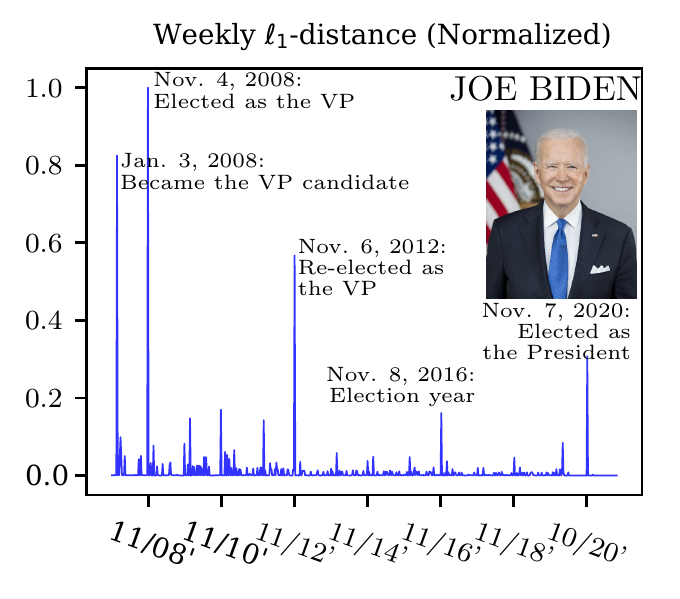}
     \label{fig:biden-life-example}
     }
     \vspace{-3mm}

\caption{ 
\footnotesize
(a) Two consecutive snapshots $\mathcal{G}_t$ and $\mathcal{G}_{t+1}$ from a dynamic weighted-graph. The inserted edge $(u, v, 1)$ further strengthens an existing relation, while edge $(u, k, 1)$ builds a new link.
(b) Anomaly tracking of Joe Biden over the \textit{Person Knowledge-Graph} using our proposed \textsc{DynAnom}.
We track Joe Biden on a weekly basis from 2007 to 2020, and calculate the $\ell_1$-distance between the representations on consecutive weeks. The detected peaks correlate well with significant changes in Biden's life.
}
\label{fig:biden-changes}
\vspace{-4mm}

\end{figure}

Analyzing the evolution of events in a large-scale dynamic network is important in many real-world applications. We are motivated by the following specific scenario: 
\begin{quote}
\small
Given a person-event interaction graph containing millions of nodes with several different types (e.g., person, event, location) with a stream of weekly updates over many years, how can we identify when specific individuals significantly changed their context?
\end{quote}
The example in Fig. \ref{fig:biden-changes} shows our analysis as Joe Biden shifted his career from senator to vice president and finally to president. In each transition he relates with various intensity to other nodes across time. We seek to identify such transitions through analysis of these interaction frequencies and its inherent network structure. This task becomes challenging as the graph size and time horizon scale up. For example, the raw graph data used in Fig. \ref{fig:biden-changes} contains roughly 3.2 million nodes and 1196 weekly snapshots (each snapshot averages 4 million edges) of Person Graph from 2000 to 2022.  

Despite the extensive literature \cite{akoglu2010oddball,yu2018netwalk,bhatia2021mstream} on graph anomaly detection, previous work focuses on different problem definitions of \textit{anomaly}. On the other hand, works \cite{yoon2019fast} on graph-level anomaly detection cannot identify individual node changes but only uncover the global changes in the overall graph structure. Most representative methods leverage tensor factorization \cite{chang2021f} and graph sketching \cite{bhatia2021sketch}, detecting the sudden dis/appearance of dense subgraphs. However, we argue that since most anomalies are locally incurred by a few anomalous nodes/edges, the global graph-level anomaly signal may overlook subtle local changes. Similarly, edge-level anomaly detection \cite{eswaran2018sedanspot} cannot identify node evolution when there is no direct edge event associated with that node. When Donald Trump became the president, his wife (\textit{Melania Trump}) changed status to become First Lady, but no explicit edge changes connected her to other politicians except Trump. According to the edge-level anomaly detection, there is no evidence for Melania's status change. 

Node representation learning-based methods such as \cite{yu2018netwalk,tsitsulin2021frede} could be helpful to identify anomaly change locally. However, it is impractical to directly apply these methods on large-scale dynamic settings due to 1) the low efficiency: re/training all node embeddings for snapshots is prohibitively expensive; 2) the missing alignment: node representations of each snapshot in the embedding space may not be inherently aligned, making the anomaly calculation difficult.

Inspired by the recent advances in local node representation methods \cite{postuavaru2020instantembedding,xingzhi2021subset} for both static and dynamic graphs, we could efficiently calculate the node representations based on approximated Personalized PageRank vectors. These dynamic node representations are keys for capturing the evolution of a dynamic graph. One important observation is that the time complexity of calculating approximate PPV is $\mathcal{O}(\frac{1}{\alpha \epsilon})$  for per queried node, where $\alpha$ is PageRank \cite{page1999pagerank} teleport probability and $\epsilon$ is the approximation error tolerance. As the graph evolves, the per-node update complexity is $\mathcal{O}(\frac{|\Delta E|}{\epsilon})$, where $|\Delta E|$ is the total edge events. The other key observation is that the representation space is inherently consistent and aligned with each other, thus there is a meaningful metric space defined directly over the Euclidean space of every node representation across different times. Current local node representation method \cite{xingzhi2021subset} only captures the node-level structural changes over unweighted graphs, which ignores the important interaction frequency information over the course of graph evolution. For example, in a communication graph, the inter-node interaction frequency is a strong signal, but the unweighted setting ignores such crucial feature. This undesirable characteristic makes the node representation becomes less expressive in weighted-graph. 

In this paper, we generalize the local node representation method so that it is more expressive and supports dynamic weighted-graph, and the resolve the practical subset node anomaly tracking problem. We summarize our contributions as follows:
\vspace{-1mm}
\begin{itemize}
\item We generalize dynamic forward push algorithm \cite{zhang2016approximate} for calculating Personalized PageRank, making it suitable for weighted-graph anomaly tracking. The algorithm updates node representations per-edge event (i.e., weighted edge addition/deletion), and the per-node time complexity linear to edge events ($|\Delta E|$) and error parameter($\epsilon$) which is independent of the node size.
\item We propose an efficient anomaly framework, namely \textsc{DynAnom} that can support both node and graph-level anomaly tracking, effectively localizing the period when a specified node significantly changed its context. Experiments demonstrate its superior performance over other baselines with the accuracy of 0.5425 compared with 0.2790, the baseline method meanwhile 2.3 times faster.
\item A new large-scale real-world graph, as new resources for anomaly tracking: PERSON graph is constructed as a more interpretable resource, bringing more research opportunities for both algorithm benchmarking and real-world knowledge discovery. Furthermore, we conduct a case study of it, successfully capturing the career shifts of three public figures in U.S., showing the usability of \textsc{DynAnom}.
\end{itemize}
\vspace{-2mm}

The rest paper is organized as follows: 
Section \ref{sec:related-work} reviews previous graph anomaly tracking methods. Section \ref{sec:notation-prelim}-\ref{sec:problem-formulation} describes the notation, preliminaries and problem formulation. 
We present our proposed framework -- \textsc{DynAnom} in Section \ref{sec:methods}, and  discuss experimental results in Section \ref{sec:experiments} 
Finally, we conclude and discuss future directions in Section \ref{sec:conclusion}.
Our code and created datasets are accessible at \textcolor{blue}{\url{https://github.com/zjlxgxz/DynAnom}}. 


\vspace{-2mm}
\section{Related Work}
\label{sec:related-work}
\vspace{-1mm}
We review three most common graph anomaly tasks over dynamic graphs, namely graph, edge and node-level anomaly.

\paragraph{Graph-level and edge-level anomaly.} Graph anomaly refers to sudden changes in the graph structure during its evolution, measuring the difference between consecutive graph snapshots \cite{aggarwal2011outlier, beutel2013copycatch,eswaran2018sedanspot, eswaran2018spotlight}. \citet{aggarwal2011outlier} use a structural connectivity model to identify anomalous graph snapshots. \citet{shin2017densealert} apply incremental tensor factorization to spot dense connection formed in a short time.  \citet{yoon2019fast} use the first and second derivatives of the global PageRank Vectors as the anomaly signal, assuming that a normal graph stream will evolve smoothly. On the other hand, edge anomaly identifies unexpected edges as the graph evolves where anomalous edges adversarially link nodes in two sparsely connected components or abnormal edge weight changes \cite{ranshous2016scalable,eswaran2018sedanspot,bhatia2020midas}. Specifically, \citet{eswaran2018sedanspot} propose to use the approximated node-pair Personalized PageRank (PPR) score before and after the new edges being inserted. Most recently, \citet{chang2021f} estimate the interaction frequency between nodes, and incorporate the network structure into the parameterization of the frequency function. However, these methods cannot  reveal the node local anomalous changes, and cannot identify the individual node changes for those without direct edge modification as we illustrated in introduction.

\paragraph{Node-level local anomaly.} Node anomaly measures the sudden changes in individual node's connectivity, activity frequency or community shifts \cite{wang2015localizing, yu2018netwalk}. \citet{wang2015localizing} uses hand-crafted features (e.g., node degree, centrality) which involve manual feature engineering processes. Recently, the dynamic node representation learning methods \cite{yu2018netwalk,goyal2018dyngem, zhou2018dynamic,kumar2019predicting, lu2019temporal,sankar2020dysat} were proposed. For example, a general strategy of them \cite{yu2018netwalk,kumar2019predicting} for adopting such dynamic embedding methods for anomaly detection is to incrementally update node representations and apply anomaly detection algorithms over the latent space. To measure the node changes over time, a comparable or aligned node representation is critical. \citet{yu2018netwalk} uses auto-encoder and incremental random walk to update the node representations, then apply clustering algorithm to detect the node outliers in each snapshot. However, its disadvantage is that the embedding space is not aligned, making the algorithm only detects anomalies in each static snapshot. Even worse, it is inapplicable to large-scale graph because the fixed neural models is hard to expand without retraining. Similar approaches \cite{kumar2019predicting} with more complicated deep learning structure were also studied. These existing methods are not suitable for the subset node anomaly tracking. Instead, our framework is inspired from recent advances in efficient local node representation algorithms \cite{zhang2016approximate,guo2017parallel,postuavaru2020instantembedding,xingzhi2021subset}, which is successful in handling subset node representations over large-scale graph. 

\vspace{-1.3mm}\section{Notations and Preliminaries}
\label{sec:notation-prelim}
\vspace{-1mm}
\paragraph{Notations.} Let $\mathcal{G}(\mathcal{V},\mathcal{E},\mathcal{W})$ be a directed weighted-graph where $\mathcal{V}$ is the set of nodes, $\mathcal{E}\subseteq \mathcal{V}\times \mathcal{V}$ is the set of edges, and $\mathcal{W}$ is corresponding edge weights of $\mathcal{E}$. For each node $v$, $\operatorname{Nei}(v)$ stands for out-neighbors of $v$. For all $v\in \mathcal{V}$, the generalized out-degree vector of $\mathcal{V}$ is $\bm d$ where $v$-th entry $d(v) \triangleq \sum_{u \in \operatorname{Nei}(v)}w_{(v,u)}$ and $w_{(v,u)}$ could be the weight of edge $(v,u)$ in graph $\mathcal{G}$. The generalized degree matrix is then denoted as $\bm D := \operatorname{diag}(\bm d)$ and the adjacency matrix is written as $\bm A$ with $\bm A(u,v) = w_{(u,v)}$.

\vspace{-1.9mm}
\subsection{PPR and Its Calculations}
\label{sec:ppr}
\vspace{-1mm}
As a measure of the relative importance among nodes, PPR, a generalized version of original PageRank \cite{page1999pagerank}, plays an important role in many graph mining tasks including tasks of anomaly tracking. Our framework is built on PPR. Specifically, the Personalized PageRank vector (PPV) of a node $s$ in $\mathcal{G}$ is defined as the following:

\begin{definition}[Personalized PageRank Vector
(PPV)] Given a graph $\mathcal{G}(\mathcal{V}, \mathcal{E},\mathcal{W})$ with $|\mathcal{V}| = n $ and $ |\mathcal{E}| = m$. Define the lazy random walk transition matrix $\bm P \triangleq (1-\beta) \bm D^{-1} \bm A + \beta  \bm I_n$, $\beta \in [0,1)$ where $\bm D$ is the generalized degree matrix and $\bm A$ is the adjacency matrix of $\mathcal{G}$.\footnote{Our PPV is defined based on the lazy random walk transition matrix. This definition is equivalent to the one using $\bm D^{-1}\bm A$ but with a different parameter $\alpha$. Without loss of generality, we use $\beta=0$ throughout the paper.} 
Given teleport probability $\alpha \in [0,1)$ and the source node $s$, the Personalized PageRank vector of $s$ is defined as:
\begin{equation}
\bm \pi_{\alpha, s} = (1-\alpha) \bm P^\top \bm \pi_{\alpha, s}  + \alpha \bm 1_s, \label{equ:static-ppr-static-graph}
\end{equation}
where the teleport probability $\alpha$ is a small constant (e.g. $\alpha=.15$), and $\bm 1_s$ is an indicator vector of node $s$, that is, $s$-th entry is 1, 0 otherwise. We simply denote PPV of $s$ as $\bm \pi_s$.
\label{def:static-ppr-static-graph}
\end{definition}

Clearly, $\bm \pi_s$ can be calculated in a closed form, i.e., $\bm \pi_{s} = \alpha  (\bm I_n - (1-\alpha ) \bm P^\top)^{-1} \bm 1_s$ but with time complexity $\mathcal{O}(n^3)$. The most commonly used method is \textit{Power Iteration} \cite{page1999pagerank}, which approximates $\bm \pi_{s}$ iteratively: $\bm \pi_{s}^{(t)} = (1-\alpha)  \bm P^\top \bm \pi_{s}^{(t-1)} + \alpha \bm 1_s$. After $t =\lceil \log_{1-\alpha}\epsilon\rceil$ iterations, one can achieve $\| \bm \pi_s - \bm \pi_s^{(t)} \|_1 \leq \epsilon$. Hence, the overall time complexity of power iteration is $\mathcal{O}(m \log_{1-\alpha} \epsilon)$ with $\mathcal{O}(m)$ memory requirement. However, the per-iteration of the power iteration needs to access the whole graph which is time-consuming. Even worse, it is unclear how one can efficiently use power-iteration to obtain an updated $\bm \pi_s$ from $\mathcal{G}_t$ to $\mathcal{G}_{t+1}$. Other types of PPR can be found in \citep{wang2017fora,wei2018topppr,wu2021unifying} and references therein.

\subsection{Dynamic Forward Push Algorithm}

Instead, \textit{forward push algorithm} \cite{andersen2006local}, a.k.a. the bookmark-coloring algorithm \cite{berkhin2006bookmark}, approximates $\pi_{s}(v)$ locally via an approximate $p_s(v)$. The key idea is to maintain solution vector $\bm p_s$ and a residual vector $\bm r_s$ (at the initial $\bm r_s = \bm 1_s, \bm p_s = \bm 0$). When the graph updates from $\mathcal{G}_t$ to $\mathcal{G}_{t+1}$, a variant forward push algorithm \cite{zhang2016approximate} dynamically maintains $\bm r_s$ and $\bm p_s$. We generalize it to a weighted version of \textit{dynamic forward push} to support dynamic updates on dynamic weighted-graph as presented in Algo. \ref{algo:forward-local-push-weighted}. At each local push iteration, it pushes large residuals to neighbors whenever these residuals are significant ($ |r_s(u)| > \epsilon d(u)$). Compare with power-iteration, this operation avoids the access of whole graph hence speedup the calculations. Based on \cite{zhang2016approximate}, we consider a more general setting where the graph could be weighted-graph. Fortunately, this weighted version of \textsc{DynamicForwardPush} still has \textit{invariant property} as presented in the following lemma.

\begin{algorithm}[ht]
\caption{$\textsc{DynamicForwardPush}$ \cite{zhang2016approximate}}
\begin{algorithmic}[1]
\State \textbf{Input: }$\bm p_s, \bm r_s, \mathcal{G}_t, \epsilon, \alpha$
\While{ exists $u$ such that $| r_s(u)| > \epsilon d(u)$}
\State $\textsc{Push}(u)$
\EndWhile
\State \Return $(\bm p_s, \bm r_s)$
\Procedure{Push}{$u$}
\State $p_s(u) \pluseq \alpha r_s(u)$
\For{$v \in \operatorname{Nei}(u)$}
\State $r_s(v) \pluseq \frac{(1-\alpha)  r_s(u)w_{(u, v)}}{ \sum_{v \in \operatorname{Nei}(u)} w_{(u, v)}} \textcolor{blue}{}$
\EndFor
\State $r_s(u) = 0 $
\EndProcedure
\end{algorithmic}
\label{algo:forward-local-push-weighted}
\end{algorithm}

\begin{lemma} [PPR Invariant Property \cite{zhang2016approximate}] Suppose $\bm \pi_s$ is the PPV of node $s$ on graph $\mathcal{G}_t$. Let $\bm p_s$ and $\bm \pi_s$ be returned by the weighted version of \textsc{DynamicForwardPush} presented in Algo. \ref{algo:forward-local-push-weighted}. Then, we have the following invariant property.
\begin{align}
\pi_s(u) &= p_s(u) + \sum_{x \in \mathcal{V}}r_s(x) \pi_s(x),  \text{ for all } u \in \mathcal{V},  \allowdisplaybreaks\\
p_s(u) + \alpha r_s(u) &= (1-\alpha) \sum_{x \in N^{in}(u)} \frac{w_{(u,x)}p_s(x)}{d(x)} + \alpha 1_{u=s},\allowdisplaybreaks
\end{align}
where $N^{in}(\cdot)$ is the in-neighbors, and $1_{u=s} = 1$ if $u = s$, 0 otherwise.
\label{lemma:ppr-invar}
\end{lemma}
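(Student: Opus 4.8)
The plan is to recast Algorithm~\ref{algo:forward-local-push-weighted} in operator form, establish the first identity as a loop invariant of the \textsc{Push} operations, and then deduce the second identity as a purely algebraic corollary of the first. \emph{Operator setup.} Because $\beta = 0$ we have $\bm P = \bm W$ with $\bm W \triangleq \bm D^{-1}\bm A$, and the weighted push coefficient $\tfrac{w_{(u,v)}}{\sum_{v\in\operatorname{Nei}(u)} w_{(u,v)}}$ is exactly the entry $\bm W(u,v) = \tfrac{w_{(u,v)}}{d(u)}$. By Definition~\ref{def:static-ppr-static-graph}, $\bm\pi_s = \alpha\bm 1_s + (1-\alpha)\bm W^\top\bm\pi_s$ and $\bm\pi_s = \alpha(\bm I_n - (1-\alpha)\bm W^\top)^{-1}\bm 1_s$; collecting the PPVs into $\bm\Pi \triangleq [\bm\pi_1\ \cdots\ \bm\pi_n] = \alpha(\bm I_n - (1-\alpha)\bm W^\top)^{-1}$ gives the two-sided relation $(\bm I_n - (1-\alpha)\bm W^\top)\bm\Pi = \bm\Pi(\bm I_n - (1-\alpha)\bm W^\top) = \alpha\bm I_n$, hence also $\bm\Pi = \alpha\bm I_n + (1-\alpha)\bm\Pi\bm W^\top$. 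Reading the $u$-th column of this last equation yields the ``column recurrence''
\[
  \bm\pi_u \;=\; \alpha\,\bm 1_u \;+\; (1-\alpha)\sum_{v\in\operatorname{Nei}(u)}\frac{w_{(u,v)}}{d(u)}\,\bm\pi_v,
\]
which is the engine of the induction below.

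\emph{First identity.} I would prove that the vector equality $\bm\pi_s = \bm p_s + \bm\Pi\,\bm r_s$, i.e. coordinatewise $\pi_s(u) = p_s(u) + \sum_{x\in\mathcal V} r_s(x)\,\pi_x(u)$, is invariant under the while-loop. Base case: at initialization $\bm p_s = \bm 0$, $\bm r_s = \bm 1_s$, so the right-hand side is $\bm\Pi\bm 1_s = \bm\pi_s$. Inductive step: let $\rho \triangleq r_s(u)$ be the residual consumed by $\textsc{Push}(u)$; the procedure performs $\bm p_s \mapsto \bm p_s + \alpha\rho\,\bm 1_u$ and $\bm r_s \mapsto \bm r_s - \rho\,\bm 1_u + (1-\alpha)\rho\sum_{v\in\operatorname{Nei}(u)}\tfrac{w_{(u,v)}}{d(u)}\bm 1_v$, so the induced change in $\bm p_s + \bm\Pi\bm r_s$ equals $\rho\bigl(\alpha\bm 1_u - \bm\pi_u + (1-\alpha)\sum_{v\in\operatorname{Nei}(u)}\tfrac{w_{(u,v)}}{d(u)}\bm\pi_v\bigr)$, which vanishes by the column recurrence. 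Since forward push terminates (each push permanently transfers an $\alpha$-fraction of a nonzero residual into $\bm p_s$, so the total residual mass is non-increasing and the process halts), the invariant holds for the returned pair $(\bm p_s, \bm r_s)$, which is the first claimed equation.

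\emph{Second identity.} Combining $\bm\pi_s = \bm p_s + \bm\Pi\bm r_s$ with $\bm\pi_s = \bm\Pi\bm 1_s$ gives $\bm p_s = \bm\Pi(\bm 1_s - \bm r_s)$. Left-multiplying by $\bm I_n - (1-\alpha)\bm W^\top$ and using $(\bm I_n - (1-\alpha)\bm W^\top)\bm\Pi = \alpha\bm I_n$ yields $\bm p_s - (1-\alpha)\bm W^\top\bm p_s = \alpha(\bm 1_s - \bm r_s)$, i.e. $\bm p_s + \alpha\bm r_s = (1-\alpha)\bm W^\top\bm p_s + \alpha\bm 1_s$. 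Taking the $u$-th coordinate and expanding $(\bm W^\top\bm p_s)(u) = \sum_{x\in N^{in}(u)}\tfrac{w_{(u,x)}}{d(x)}p_s(x)$ over the in-neighbours of $u$ gives the second claimed equation.

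\emph{Main obstacle.} The only point that is genuinely new relative to the unweighted analysis of~\cite{zhang2016approximate} is checking that the weighted push coefficients are precisely the entries of $\bm W = \bm D^{-1}\bm A$; once this is observed every algebraic step is formally identical to the unweighted case. The steps that require care are the bookkeeping in the inductive step — correctly tracking the simultaneous updates to $\bm p_s$ and $\bm r_s$ and recognising that their combined effect is exactly $\rho$ times the column recurrence — together with the boundary cases (the ordering of ``add the fractions, then zero $r_s(u)$'' when $u$ carries a self-loop, and nodes with $d(u)=0$), and finally invoking termination so that the invariant transfers from loop iterations to the returned vectors.
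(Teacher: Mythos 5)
Your proof is correct, and it is worth noting that the paper itself supplies no argument for this lemma at all --- it is imported wholesale from \cite{zhang2016approximate}, and the appendix only proves the update rules of Theorem~\ref{theorem:dyn-adjust-weighted-graph} \emph{assuming} the invariant. So there is nothing in the paper to compare against; what you have written is the standard Andersen--Chung--Lang-style argument, correctly adapted to the weighted setting by identifying the push coefficients with the entries of $\bm W=\bm D^{-1}\bm A$. Two remarks. First, you have (rightly) proved the \emph{intended} statement rather than the printed one: the first display in the lemma, $\pi_s(u)=p_s(u)+\sum_{x}r_s(x)\pi_s(x)$, cannot be right as written since the sum is independent of $u$; the correct form is $\pi_s(u)=p_s(u)+\sum_{x}r_s(x)\pi_x(u)$, i.e.\ $\bm\pi_s=\bm p_s+\bm\Pi\bm r_s$, which is exactly what your loop invariant establishes, and your second identity then follows by the clean algebraic step $\bm p_s=\bm\Pi(\bm 1_s-\bm r_s)$ hit with $\bm I_n-(1-\alpha)\bm W^\top$. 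Second, your termination argument should be tightened from ``the total residual mass is non-increasing'' to a strict decrease bounded away from zero: each $\textsc{Push}(u)$ reduces $\|\bm r_s\|_1$ by at least $\alpha|r_s(u)|>\alpha\epsilon\, d(u)$ (this also covers the dynamic case with signed residuals), which is what actually forces the while-loop to halt; and the self-loop/zero-degree boundary cases you flag are genuine but are conventionally excluded. With those small repairs the argument is complete and self-contained, which is more than the paper provides.
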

\paragraph{From PPVs to node representations.} Obtained PPVs are not directly applicable to our anomaly tracking problem as operations on these $n$-dimensional vectors is time-consuming. To avoid this, we treat PPVs as the intriguing similarity matrix and project them into low-dimensional spaces using transformations such as SVD \cite{qiu2019netsmf}, random projection \cite{zhang2018billion, chen2019fast}, matrix sketch\cite{tsitsulin2021frede}, or hashing \cite{postuavaru2020instantembedding} so that the original node similarity is well preserved by the low dimension node representations. We detail this in our main section.

\section{Problem formulation}
\label{sec:problem-formulation}
Before we present the subset node anomaly tracking problem. We first define the edge events in the dynamic graph model as the following: A set of edge events $\Delta E_t$ from $t$ to $t+1$ is a set of operations on edges, i.e., edge insertion, deletion or weight adjustment while the graph is updating from $\mathcal{G}_t$ to $\mathcal{G}_{t+1}$. Mathematically, $\Delta E_t \triangleq \{ (u_0, v_0, \Delta w_{(u_0, v_0)}, (u_1, v_1, \Delta w_{(u_1, v_1)})$, $\ldots, (u_i, v_i, \Delta w_{(u_i, v_i)})\}$ where each $\Delta w_{(u_i, v_i)}$ represents insertion (or weight increment) if it is positive, or deletion (decrement) if it is negative. Therefore, our dynamic graph model can be defined as a sequence of edge events. We state our formal problem definition as the following.

\begin{definition}[Subset node Anomaly Tracking Problem]
Given a dynamic weighted-graph $\mathcal{G}_t = \langle \mathcal{V}_t,\mathcal{E}_t, \mathcal{W}_t \rangle, \forall t \in [0,T]$, consisting of initial snapshot $\mathcal{G}_0$ and the following T snapshots that have edge events $\Delta E_t, |\Delta E_t| \geq 0$ (e.g., edge addition, deletion, weight adjustment from $t-1$ to $t$). We want to quantify the status change of a node $v$ in the targeted node subset $v \in \mathcal{S} \triangleq \{ v_0, v_1, ..., v_k \} $ from $\mathcal{G}_{t-1}$ to $\mathcal{G}_{t}$ with the anomaly measure function $f(\cdot, \cdot)$ so that the measurement is consistent to human judgement, or reflects the ground-truth of node status change if available. To illustrate this edge event process, Table \ref{tab:node-track-def} presents the process for better understanding.
\label{def:node-track-def}
\end{definition}

\begin{table}[!ht]
\caption{Illustration of subset node anomaly tracking problem. At each time $t$, node representations $\bm x_i^t$ are provided, and anomaly of node changes is quantified by $f(\bm x_i^{t-1}, \bm x_i^t$), which is expected to correlated with the anomaly label if available.}
\vspace{-3mm}
\centering
\begin{tabular}{p{2.2cm}||c|c|c|c|c}
\toprule
 Timestamp & 0 & 1 & 2 & 3 & ...  \\
\hline 
 Edge events & - & $\Delta E_1$ & $\Delta E_2$ & $\Delta  E_3$ & ...  \\
\hline 
 Snapshots & $G_0$ & $G_1$ & $G_2$ & $G_3$ & ... \\
\hline 
 Score for $v_i \in \mathcal{S}$ & - & \small{$f(x_i^{0},x_i^{1})$} &  \small{$f(x_i^{1},x_i^{2})$}&  \small{$f(x_i^{2},x_i^{3})$} & ... \\
\hline 
 Label for $v_i \in \mathcal{S}$ & - & Normal &  Normal & Abnormal & ... \\
\bottomrule
\end{tabular}
\label{tab:node-track-def}
\end{table}

\vspace{-3mm}
\section{The Proposed Framework}
\label{sec:methods}
To localize the node anomaly across time, our idea is to incrementally obtain the PPVs as the node representation at each time, then design anomaly score function $f(\cdot, \cdot)$ to quantify the PPVs changes from time to time. This section presents our proposed framework $\textsc{DynAnom}$ which has three components: 1) A provable dynamic PPV update strategy extending \cite{zhang2016approximate} to weighted edge insertion/deletion; 2) Node level anomaly localization based on incremental PPVs;  3) Graph level anomaly detection based on approximation of global PageRank changes. 
We first present how local computation of PPVs can be generalized to dynamic weighted graphs, then present the unified framework for node/graph-level anomaly tracking, finally we analyze the complexity of an instantiation algorithm.
 
\subsection{Maintenance of Dynamic PPVs}

Multi-edges record the interaction frequency among nodes, and reflect the evolution of the network not only in topological structure, but also in the strength of communities. Previous works \cite{zhang2016approximate,xingzhi2021subset} focus only on the structural changes over unweighted graph, ignoring the multi-edge cases. In a communication graph (e.g., Email graph), the disadvantage of such method is obvious: as more interactions/edges are observed, the graph becomes denser, or even turn out to be fully-connected in an extreme case. Afterwards, all the upcoming interactions will not change the node status since the ignorance of interaction frequency. This aforementioned scenario motivates us to generalize dynamic PPV maintenance to weighted graph, expanding its usability for more generic graphs and diverse applications. In order to incorporate edge weights into the dynamic forward push algorithm to obtain PPVs, we modify the original algorithm \cite{andersen2006local,zhang2016approximate} by adding a weighted push mechanism as presented in Algo. \ref{algo:forward-local-push-weighted}. Specifically, for a specific node $v$, at each iteration of the push operation, we update its neighbor residual $r_s(v)$ as follows:
\begin{equation}
r_s(v) \pluseq \frac{(1-\alpha) r_s(u) w_{(u,v)}}{\sum_{v\in \operatorname{Nei}(u)} w_{(u,v)}}. \label{equ:update-rule}
\end{equation}
The modified update of $r_s(v)$ in Equ. \eqref{equ:update-rule} efficiently approximate the PPR over static weighted graph with same time complexity as the original one.  At time snapshot $\mathcal{G}_t$, the weight of each edge event $\Delta w_{(u, v)}$ could be either positive or negative, representing edge insertion or deletion) at time $t$ so that the target graph $\mathcal{G}_t$ updates to $\mathcal{G}_{t+1}$. However, this set of edge events will break up the invariant property shown in Lemma \ref{lemma:ppr-invar}. To maintenance the invariant property on weighted graphs, we present the following Thm. \ref{theorem:dyn-adjust-weighted-graph}, a generalized version from \cite{zhang2016approximate}. The key idea is to update $\bm p_s, \bm r_s$ so that updated weights satisfy the invariant property. We present the theorem as follows:

\begin{figure}[ht]
\centering
\includegraphics[width=0.47\textwidth]{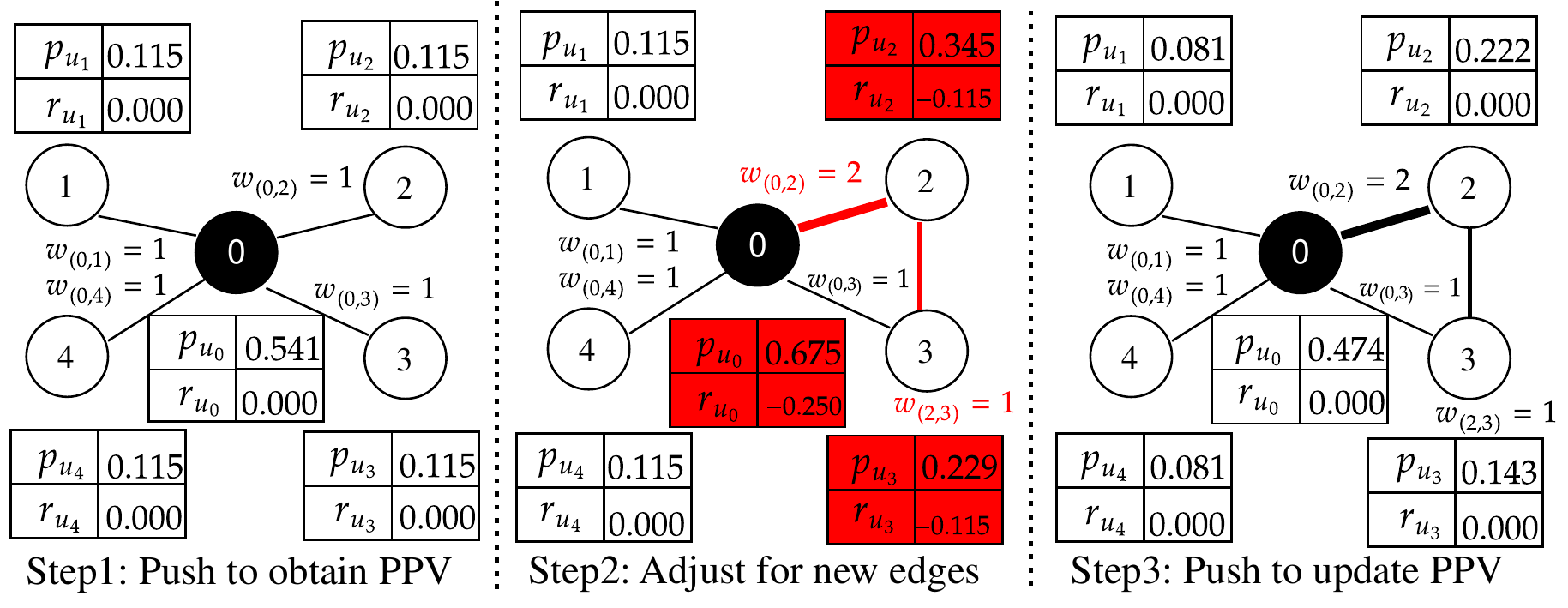}
\vspace{-2mm}
\caption{Illustration of the PPR adjustment over a  weighted graph of five nodes. \textit{Step1}: Apply Algo.\ref{algo:forward-local-push-weighted} to calculate the initial $p_s,r_s$ where $s=v_0, \alpha = 0.15$ over the initial graph. \textit{Step2}: After inserting new edges  $(v_0,v_2,1), ((v_2,v_3,1))$, the strength between $v_0$ and $v_2$ increases, and $v_2$ builds a new connection to $v_3$, which both break the invariant in Lemma \ref{lemma:ppr-invar}. We adjust to recover the invariant by applying Theorem \ref{theorem:dyn-adjust-weighted-graph} in the red-colored blocks. \textit{Step3:} We re-apply Algo.\ref{algo:forward-local-push-weighted} and update $p_s$ for better approximation error. 
}
\label{fig:new-edge-example}
\end{figure}

\paragraph{Generalized PPR update rules.} Figure \ref{fig:new-edge-example} illustrates the dynamic maintenance of PPV over a weighted graph where two new edges are added, incurring edge weight increment between node 0 and 2, and new connection between node 2 and 3. Note that one only needs to have $\mathcal{O}(1)$ time to update $p_s(u), r_s(u)$, and $r_s(v)$ for each edge event. As proved, the total cost of $m$ edge events will be linearly dependent on, i.e., $\mathcal{O}(m)$. This efficient update procedure is much better than naive updates where one needs to calculate all quantities from scratch.

\begin{theorem} [Generalized PPR update rules]
Given an edge event $(u,v, \Delta w_{(u,v)})$. We use the the following update rules
\begin{align}
p'_s(u) &= p_s(u) \frac{\sum_{v \in \operatorname{Nei}(u)} w_{(u,v)}+\Delta w_{(u,v)}}{\sum_{v \in \operatorname{Nei}(u)} w_{(u,v)}}, \label{eq:p-update-weighted-body} \\
r_s'(u) &=  r_s(u) -\frac{\Delta w_{(u,v)} p_s(u)}{\alpha \sum_{v \in \operatorname{Nei}(u)} w_{(u,v)}}, \label{eq:r-update-weighted-body} \\
r_s'(v) &= r_s(v) + \frac{(1-\alpha)}{\alpha} \frac{\Delta w_{(u,v)}p_s(u)}{\sum_{v \in \operatorname{Nei}(u)} w_{(u,v)}}. \label{eq:r-v-update-weighted-body}
\end{align}
After applying a sequence of updating rules and using \textsc{DynamicForwardPush} algorithm,  then $\bm p_s$ and $\bm r_s$ satisfy invariant property in Lemma \ref{lemma:ppr-invar} and accurately approximate PPV, as proved in Appendix \ref{appendix-proof}.
\label{theorem:dyn-adjust-weighted-graph}
\end{theorem}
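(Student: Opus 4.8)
The only genuinely new ingredient is that the three rules~\eqref{eq:p-update-weighted-body}--\eqref{eq:r-v-update-weighted-body} restore the invariant of Lemma~\ref{lemma:ppr-invar} after a weighted edge event; granting that, the rest follows from Lemma~\ref{lemma:ppr-invar} and an induction. I would first rewrite the second identity of Lemma~\ref{lemma:ppr-invar} in matrix form as $\bm p_s + \alpha\bm r_s = \alpha\bm 1_s + (1-\alpha)\bm P^{\top}\bm p_s$ (recall $\beta=0$ gives $\bm P=\bm D^{-1}\bm A$, and $\sum_{v\in\operatorname{Nei}(u)}w_{(u,v)}=d(u)$ by definition); the first identity of Lemma~\ref{lemma:ppr-invar} follows from this one by linearity of the PPR operator $\bm x\mapsto\alpha(\bm I-(1-\alpha)\bm P^{\top})^{-1}\bm x$, so it suffices to maintain this single linear identity. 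It holds in the fresh state $\bm r_s=\bm 1_s,\bm p_s=\bm 0$, and Lemma~\ref{lemma:ppr-invar} already tells us every weighted \textsc{Push} of Algo.~\ref{algo:forward-local-push-weighted} preserves it, so the whole job reduces to: an edge event followed by~\eqref{eq:p-update-weighted-body}--\eqref{eq:r-v-update-weighted-body} again produces a pair satisfying the identity, now with respect to the updated matrix $\bm P'$.

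\emph{The core step.} I would fix one event $(u,v,\Delta w_{(u,v)})$ and note that it changes only the $u$-th entry of $\bm d$ and only row $u$ of $\bm P$ (every out-transition of $u$ is scaled by $d(u)/(d(u)+\Delta w_{(u,v)})$, and the $u\!\to\!v$ entry additionally gains $\Delta w_{(u,v)}$ in its numerator), then verify $\bm p_s'+\alpha\bm r_s'=\alpha\bm 1_s+(1-\alpha)(\bm P')^{\top}\bm p_s'$ coordinate by coordinate. The left side moves only at coordinates $u$ and $v$, the right side only at the out-neighbors of $u$ and at $v$, and three cancellations do all the work: (a) at an old out-neighbor $w\in\operatorname{Nei}(u)\setminus\{v\}$ the contribution of $u$ to the right-hand side, $(1-\alpha)\frac{w_{(u,w)}}{d(u)}p_s(u)$, is unchanged, because the shrunken transition $\frac{w_{(u,w)}}{d(u)+\Delta w_{(u,v)}}$ meets the enlarged $p_s'(u)=p_s(u)\frac{d(u)+\Delta w_{(u,v)}}{d(u)}$ of~\eqref{eq:p-update-weighted-body}; (b) at coordinate $v$ the right side grows by $(1-\alpha)\frac{\Delta w_{(u,v)}}{d(u)}p_s(u)$ from the added weight on $u\!\to\!v$, while the left side grows by $\alpha$ times the bump in~\eqref{eq:r-v-update-weighted-body}, which is the same quantity; (c) at coordinate $u$ the right side does not move (no transition into $u$, and no $p_s(x)$ with $x\in N^{in}(u)$, changes), and the left side does not either, since the growth $p_s(u)\frac{\Delta w_{(u,v)}}{d(u)}$ of $p_s'(u)$ is exactly cancelled by $\alpha$ times the decrement in~\eqref{eq:r-update-weighted-body}. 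Every remaining coordinate is inert on both sides. The same algebra handles $\Delta w_{(u,v)}<0$ (deletion or weight decrease, provided $d(u)+\Delta w_{(u,v)}>0$) and $v\notin\operatorname{Nei}(u)$ (then $w_{(u,v)}=0$ and a new edge appears), so no real case analysis is needed; a self-loop $u=v$ is the one configuration I would treat separately, since there an in-transition and $p_s(u)$ coincide.

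\emph{Wrapping up.} Applying the events of $\Delta E_t$ one by one — each rule reading the out-degree of $u$ as it currently stands, which is exactly the denominator $\sum_{v\in\operatorname{Nei}(u)}w_{(u,v)}$ in~\eqref{eq:p-update-weighted-body}--\eqref{eq:r-v-update-weighted-body} — and using the core step inductively shows $(\bm p_s,\bm r_s)$ still satisfies the matrix invariant on $\mathcal{G}_{t+1}$ before any pushing. Running \textsc{DynamicForwardPush} on $\mathcal{G}_{t+1}$ then keeps the invariant (Lemma~\ref{lemma:ppr-invar}) while lowering every residual to $|r_s(x)|\le\epsilon\,d(x)$; bounding the residual contribution in the first identity of Lemma~\ref{lemma:ppr-invar} by a term of order $\epsilon$ then gives the claimed approximation of $\bm\pi_s$ on $\mathcal{G}_{t+1}$, exactly as in the standard forward-push analysis. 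Each of~\eqref{eq:p-update-weighted-body}--\eqref{eq:r-v-update-weighted-body} touches $O(1)$ entries and never enumerates $\operatorname{Nei}(u)$, giving the $O(m)$ cost for an $m$-event stream before the re-push.

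\textbf{Main obstacle.} The crux, and the only place something can go wrong, is the core step: seeing that rescaling the entire transition \emph{row} of $u$ — which at first glance disturbs the invariant at every out-neighbor of $u$ — is neutralized everywhere but at $v$ by the single scalar rescaling of $p_s(u)$ in~\eqref{eq:p-update-weighted-body}, and that the residual discrepancies thereby created at $v$ and at $u$ must be parked precisely in $r_s(v)$ and $r_s(u)$ (and nowhere else). Once that is understood the three rules are forced and the remainder is bookkeeping.
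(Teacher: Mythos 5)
Your proposal is correct and follows essentially the same route as the paper's Appendix~\ref{appendix-proof}: both hinge on the observation that rule~\eqref{eq:p-update-weighted-body} preserves the ratio $p_s(u)/d(u)$ so the invariant at the old out-neighbors of $u$ is untouched, that rule~\eqref{eq:r-update-weighted-body} cancels the resulting change in $p_s(u)+\alpha r_s(u)$ at $u$ itself, and that rule~\eqref{eq:r-v-update-weighted-body} absorbs the extra mass $(1-\alpha)\Delta w_{(u,v)}p_s(u)/d(u)$ arriving at $v$. The only cosmetic difference is that you verify the three rules coordinate-by-coordinate while the paper derives each rule as the unique adjustment forced by the invariant; your added remarks on the self-loop case and on negative $\Delta w_{(u,v)}$ are sensible but not a different argument.
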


\paragraph{From dynamic PPVs to node representations.} The above theorem provides efficient rules to incrementally update PPVs for dynamic weighted graphs. To obtain dynamic node representation, two universal hash functions $h_{\operatorname{d}}$ and $h_{\operatorname{sgn}}$ are used (see details in \cite{postuavaru2020instantembedding}).  In the following section, we describe how to leverage PPVs as node representation for tracking node/graph-level anomaly.
 
\subsection{Anomaly Tracking Framework: \textsc{DynAnom}}

The key idea of our proposed framework is to measure the individual node anomaly in the embedding space based on Personalized PageRank \citep{postuavaru2020instantembedding,xingzhi2021subset}, and aggregate the individual anomaly scores to represent graph-level anomaly. By the fact that the PPV-based embedding space is aligned, we can directly measure the node anomaly by any distance function.\footnote{In this paper, we explored $\ell_1$-distance function where given $\bm x,\bm y$, $f(\bm x, \bm y) = \| \bm x - \bm y\|_1$.} We present our design as follows:

\subsubsection{Node-level Anomaly Tracking}
Since the inherently local characteristic of our proposed framework, we can efficiently measure the status of arbitrary nodes in two consecutive snapshots in an incrementally manner. Therefore, we could efficiently solve the problem defined in Def. \ref{def:node-track-def} and localize where the anomaly actually happens by ranking the anomalous scores. To do this, the key ingredient is to measure node difference from $t-1$ to $t$ as $\delta_s^t = f(\bm x_s^{t-1} , \bm x_s^{t})$ where $\bm x_s^{t}$ is the node representation of node $s$ at time $t$. Based on the above motivation, we first present the node-level anomaly detection algorithm \textsc{DynAnom} in Algo. \ref{algo:dynanom-node} where the score function $f(\cdot,\cdot)$ is realized by $\ell_1$ norm in our experiments. There are three key steps: \textit{Step 1.} calculating dynamic PPVs for  all nodes in $S$ over $T$ snapshots (Line 3); \textit{Step 2.} obtaining node representations of $S$ by local hash functions. Two universal hash functions\footnote{We use \textit{sklearn.utils.murmurhash3\_32} in our implementation.} are used in \textsc{DynNodeRep} (Line 5); \textit{Step 3.} calculating node anomaly scores based on node representation $\bm x_s$ for all $s \in S$.

Noted that our design directly use the first-order difference ($\nabla x_s$) between two consecutive time. Although there are more complex design of $f(\cdot, \cdot)$ under this framework, such as, second-order measurement $\nabla^2 x_s$ similar as in \citet{yoon2019fast}, clustering based anomaly detection \cite{yu2018netwalk}, or supervised anomaly learning. We restrict our attention on the simplest form by setting $p=1 \text{ or } 2$ with the first-order measure. In our framework, the definition of anomaly score can be highly flexible for various applications.

\begin{algorithm}[ht]
\caption{$\textsc{DynAnom}(\mathcal{G}_0, \Delta E_{1, \ldots, T} , \mathcal{S}, \epsilon, \alpha, p)$ }
\begin{algorithmic}[1]
\State \textbf{Input: }  Initial graph $\mathcal{G}_0$, Edge events $ \Delta E_{1, \ldots, T}$ , Subset target nodes $\mathcal{S}$,  PPV quality $\epsilon$,  teleport factor $\alpha $.
\State \textcolor{blue}{//Step 1: Incrementally calculate PPVs}
\State $\bm p_{s\in \mathcal{S}} = \textsc{IncrementPush}(\mathcal{G}_0, \Delta E_{1, \ldots, T} , S, \epsilon, \alpha)$
\State \textcolor{blue}{//Step 2: Obtain node representations in target set $\mathcal{S}$ }
\State $\bm x_{s\in \mathcal{S}} = \textsc{DynNodeRep}(p_s^0, ... ,P_s^T)$
\State \textcolor{blue}{//Step 3: Calculate node anomaly for all nodes in $\mathcal{S}$}
\For{$s \in S, t \in [1,T] $}
    \State $\delta_s^t = f( x_s^{t-1}, x_s^{t}) = \left( \sum_{i} \left| x_s^{t}(i) - x_s^{t-1}(i) \right|^p \right)^ {1/p}$
\EndFor
\State \textbf{return} $\bm \delta_s = [\delta_s^1,\delta_s^2,\ldots, \delta_s^T], \forall s \in S$

\noindent\hrulefill
\Procedure{IncrementPush}{$\mathcal{G}_0, \Delta E_{1, \ldots, T} , \mathcal{S}, \epsilon, \alpha$}
\State $t=0$
\For{$s \in \mathcal{S} := \{v_1,v_2,\ldots, v_k\}$}
\State $p_s^t = \bm 0, \quad r_s^t = 1_s $
\State $p_s^t, r_s^t=$\textsc{DynamicForwardPush}$(\mathcal{G}^0, s, p_s^t, r_s^t, \epsilon, \alpha)$
\EndFor
\For{$t \in [1,T]$}
    \For{$s \in \mathcal{S}$}
        \For{$(u,v,\Delta w_{(u,v)}) \in \Delta E_t$}
            \State update $p_s^t(u), r_s^t(u), r_s^t(v)$ uses rules in Thm. \ref{theorem:dyn-adjust-weighted-graph}.
        \EndFor
    \State $p_s^t, r_s^t=$\textsc{DynamicForwardPush}$(\mathcal{G}^0, s, p_s^t, r_s^t, \epsilon, \alpha)$
    \EndFor
    \State $\mathcal{G}_t += (u,v,\Delta w_{(u,v)})$
\EndFor
\State \textbf{return} $ \bm p_s =[ p_s^0, ... , p_s^T ], \forall s \in S$ 
\EndProcedure
\noindent\hrulefill
\Procedure{DynNodeRep}{$\bm p_s^0, \ldots ,\bm p_s^T$}
\State $\epsilon_{c}= \textsc{min}(\frac{1}{|\mathcal{V}|}, \text{1e-5}), dim = 1024$
\For{$t \in [1,T)$}
    \For{$i \in \cup_{t'\in\{t, t-1\}} \textsc{supp}(p_s^{t'})$}:
    \State $p_s^{t-1}(i) = 0$ if $p_s^{t-1}(i) \leq \epsilon_c $
    \State $p_s^t(i) = 0$ if $p_s^t(i) \leq \epsilon_c $
    \EndFor
    \State $x_s^{t} = \textsc{ReduceDim}(\frac{p_s^t}{\|p_s^t\|_1}, dim)$
    \State $x_s^{t-1} = \textsc{ReduceDim}( \frac{p_s^{t-1}}{\|p_s^{t-1}\|_1}, dim)$
    \State \textbf{return} $ \bm x_s =[ x_s^0, ... , x_s^T ], \forall s \in S$ 
\EndFor
\EndProcedure
\noindent\hrulefill
\Procedure{ReduceDim}{$x,dim$}
\State // Hash function $h_{dim}(i): \mathbb{N}\to [dim]$
\State // Hash function $h_{\operatorname{sgn}}(i): \mathbb{N}\to \{ \pm 1\}$
\If{$\textsc{dim}(x) \leq dim  $}
    \State \textbf{return} $x$
\Else
    \State $\bar x = \bm 0 \in \mathcal{R}^{dim}$
    \For{$i \in \textsc{Supp}(x)$}
        \State $\bar x(h_{dim}(i)) \pluseq h_{\operatorname{sgn}}(i) \log \left(x(i) \right)$
    \EndFor
    \State \textbf{return} $ \frac{\bar x}{\| \bar x \|_1}$
\EndIf
\EndProcedure
\end{algorithmic}
\label{algo:dynanom-node}
\end{algorithm}

\subsubsection{Graph-level Anomaly Tracking}

Based on the node-level anomaly score proposed in Algo. \ref{algo:dynanom-node}, we propose the graph-level anomaly. The key property we used for graph-level anomaly tracking is the linear relation property of PPV. More specifically, let $\bm \pi$ be the PageRank vector. We note that the PPR vector is linear to personalized vector $\bm \pi_s$. Therefore, global PageRank could be calculated by the weighted average of single source PPVs.
\vspace{-1.0mm}
\begin{align}
\bm \pi = \alpha \frac{\bm \vec 1}{|\mathcal{V}|} \sum_{i = 0}^{\infty}(1-\alpha)^{i} \bm P^{i}  = \sum_{ s \in \mathcal{V} } \frac{1_s}{|\mathcal{V}|} \bm \pi_{s} = \frac{1}{|\mathcal{V}|} \sum_{ s \in \mathcal{V} } \bm \pi_{s}, \allowdisplaybreaks
\end{align}
From the above linear equality, we formulate the global PPV, denoted as $\pi_{, g}$,as the linear combination of single-source PPVs as following:
\vspace{-1.0mm}
\begin{align}
\bm \pi_{g} = \sum_{ s \in \mathcal{V} } \gamma_s \pi_{s} , \gamma_s = \frac{d(s)}{m}, m= \sum_{i \in \mathcal{V}} d(i) \allowdisplaybreaks
\end{align}

In order to capture the graph-level anomaly, we use the similar heuristic weights \cite{yoon2019fast} $\gamma_s = \frac{d(s)}{m}$, which implies that $\bm \pi_{\alpha}$ is dominated by high degree nodes, thus the anomaly score is also dominated by the greatest status changes from high degree nodes as shown below:
\vspace{-1.0mm}
\begin{align}
    \| \bm \pi_{g}^t - \bm \pi_{g}^{t-1} \|_{1} & = 
    \| \sum_{ s \in \mathcal{V} } \gamma_s^{t-1} \bm \pi^{t-1}_{ s}  - \sum_{ s \in \mathcal{V} } \gamma_s^t \bm \pi^t_{ s} \|_1  \nonumber \allowdisplaybreaks\\
    &= \| \sum_{ s \in \mathcal{V} } \frac{d_s^{t-1}}{m^{t-1}} \bm \pi^{t-1}_{ s}  - \sum_{ s \in \mathcal{V} } \frac{d_s^{t}}{m^{t}} \bm \pi^t_{ s} \|_1 \nonumber \allowdisplaybreaks\\
    &\approx   \sum_{ s \in \mathcal{V}^t_{high}}  \frac{d_s^{t}}{m^{t}} \| \bm \pi^{t-1}_{ s}  - \bm \pi^t_{ s} \|_1 \nonumber \allowdisplaybreaks\\
    &\leq \sum_{ s \in \mathcal{V}^t_{high}}  \| \bm \pi^{t-1}_{ s}  - \bm \pi^t_{ s} \|_1,\allowdisplaybreaks
\end{align}
where $\mathcal{V}_{high}^t$ denotes the set of high-degree nodes. Note that the approximate upper bound of $\| \bm \pi_g^t - \bm \pi_g^{t-1}\|_1$ can be lower bounded :
\begin{align}
\footnotesize
\sum_{ s \in \mathcal{V}^t_{high}}  \| \bm \pi^{t-1}_{ s}  - \bm \pi^t_{ s} \|_1 \geq   \textsc{max}(\{ \left\| \bm \pi^{t-1}_{ s} - \bm \pi^t_{ s} \right\|_1 ,\forall s \in \mathcal{V}^t_{high} \} ).
    \label{eq:graph-anomaly-score}
\end{align}
By the approximation of global PPV difference in Equ. \eqref{eq:graph-anomaly-score}, we discover that the changes of high degree nodes will greatly affect $\ell_1$-distance, and similarly for $\ell_2$-distance. The above analysis assumes $d_s^{t-1} / m^{t-1} \approx d_s^{t} / m^{t}$.  We can also hypothesize that the changes in high-degree node will flag graph-level anomaly signal. Therefore, we use high-degree node tracking strategy for graph-level anomaly:
\begin{align}
\textsc{DynAnom}_{graph}^t &:= \textsc{max}(\{ \| \bm \pi^{t-1}_{ s} - \bm \pi^t_{ s} \|_1 ,\forall s \in \mathcal{V}_{high} \} ). \allowdisplaybreaks
\end{align}
In practice, we incrementally add high degree nodes (e.g., top-50) into the tracking list from each snapshots and extract the maximum PPV changes among them as the graph-level anomaly score. This proposed score could capture the anomalous scenario where one of the top nodes (e.g., popular hubs) encounter great structural changes, similar to the DDoS attacks on the internet.

\subsubsection{Comparison between \textsc{DynAnom} and other methods.}
The significance of our proposed framework compared with other anomaly tracking methods is illustrated in Table \ref{tab:algo-comparison}.  \textsc{DynAnom} is capable of detecting both node and graph-level anomaly, supporting all types of edge modifications, efficiently updating node representations (independent of $|\mathcal{V}|$), and working with flexible anomaly score functions customized for various downstream applications.

\newcolumntype{D}[1]{>{\centering\let\newline\\\arraybackslash\hspace{0pt}}m{#1}}

\begin{table}[ht]
        \centering
\caption{The supported features of DynAnom and other baselines.  }
\small
\vspace{-2mm}
\begin{tabular}{@{\extracolsep{2pt}}p{0.13\linewidth}p{0.04\linewidth}p{0.04\linewidth}p{0.06\linewidth}p{0.06\linewidth}p{0.09\linewidth}p{0.05\linewidth}p{0.06\linewidth}p{0.06\linewidth}@{}}
\toprule 
  \multirow{2}{*}{\diaghead(4,-3){\hskip 13mm} {Method}{Feature}}  & \multicolumn{2}{D{0.14\linewidth}}{Anomaly} & \multicolumn{3}{D{0.25\linewidth}}{Edge Event Types} & \multicolumn{3}{D{0.24\linewidth}}{ Algorithm  } \\
\cmidrule{2-3} \cmidrule{4-6}\cmidrule{7-9}

& Node level & Graph level & Edge Stream & Add/ Delete & Weight Adjust & $|\mathcal{V}|$ Ind. & Align Repr.  & Flex. Score  \\
\midrule 
 SedanSpot & \redxmark  & \redxmark & \bluecheck & \redxmark & \redxmark & \bluecheck & \redxmark & \redxmark \\
 AnomRank & \redxmark & \bluecheck & \redxmark & \bluecheck & \bluecheck & \redxmark & \redxmark & \redxmark \\
 NetWalk & \bluecheck & \redxmark & \redxmark & \bluecheck & \redxmark & \redxmark & \redxmark & \redxmark \\
 DynPPE & \bluecheck & \redxmark & \bluecheck & \bluecheck & \redxmark & \bluecheck & \bluecheck & \redxmark \\
 \hline
 \textbf{DynAnom} & \bluecheck & \bluecheck & \bluecheck & \bluecheck & \bluecheck & \bluecheck & \bluecheck & \bluecheck \\
 \bottomrule
\end{tabular}
\label{tab:algo-comparison}
\end{table}

\paragraph{Time Complexity Analysis.}
The overall complexity of tracking subset of $k$ nodes across $T$ snapshots depends on run time of three main steps as shown in Algo. \ref{algo:dynanom-node}: 1. the run time of \textsc{IncrementPush} for nodes $\mathcal{S}$; 2. the calculation of dynamic node representations, which be finished in $\mathcal{O}(k T |\operatorname{supp}(\bm p_{s^\prime})|)$ where $|\operatorname{supp}(\bm p_{s^\prime})|$ is the maximal support of all $k$ sparse vectors, i.e. $|\operatorname{supp}(\bm p_{s^\prime})| = \max_{s\in \mathcal{S}} |\operatorname{supp}(\bm p_{s})|$. The overall time complexity of our proposed framework is stated as in the following theorem.
\begin{theorem}[Time Complexity of \textsc{DynAnom}]
Given the set of edge events where $\mathbb{E} = \{ e_1, e_2,\ldots, e_m\}$ and $T$ snapshots and subset of target nodes $\mathcal{S}$ with $|\mathcal{S}| = k$, the instantiation of our proposed \textsc{DynAnom} detects whether there exist events in these $T$ snapshots in $\mathcal{O}( k m/\alpha^2 + k \bar{d^t} / (\epsilon \alpha^2) +  k/(\epsilon \alpha) + k T s)$ where $\bar{d^t}$ is the average node degree of current graph $\mathcal{G}_t$ and $s\triangleq \max_{s\in\mathcal{S}} |\operatorname{supp}(\bm p_s)|$.
\label{thm:time-complexity}
\end{theorem}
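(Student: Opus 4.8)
The stated bound is simply the total running time of \textsc{DynAnom} (Algo.~\ref{algo:dynanom-node}), so the plan is to charge this running time to the three stages of the algorithm --- (I) \textsc{IncrementPush}, (II) \textsc{DynNodeRep}, and (III) the anomaly-score loop (Line~8) --- and add the bounds. Stages (II) and (III) are handled by a direct support-size count: for each source $s\in\mathcal{S}$ and each of the $T$ snapshots, \textsc{DynNodeRep} only iterates over $\operatorname{supp}(\bm p_s^{t-1})\cup\operatorname{supp}(\bm p_s^{t})$ and \textsc{ReduceDim} only over the support it is handed, while the $\ell_p$-distance in Line~8 touches at most $\min(s,dim)$ coordinates; thus each (source, snapshot) pair costs $\mathcal{O}(s)$ with $s=\max_{s\in\mathcal{S}}|\operatorname{supp}(\bm p_s)|$, contributing the $\mathcal{O}(kTs)$ summand. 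All remaining work sits in stage (I).

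I would split \textsc{IncrementPush} into (Ia) the $k$ initial \textsc{DynamicForwardPush} calls on $\mathcal{G}_0$ and (Ib) the main loop over $t\in[1,T]$ that applies the edge-event updates and re-pushes. For (Ia) I would re-establish the classical forward-push runtime bound and check it is unaffected by the weighted push of Equ.~\eqref{equ:update-rule}: a single \textsc{Push}$(u)$ costs $\mathcal{O}(d(u))$ and, being executed only while $|r_s(u)|>\epsilon\,d(u)$, removes at least $\alpha\epsilon\,d(u)$ from $\|\bm r_s\|_1$; since the residual mass starts at $1$, the total push work is $\mathcal{O}(1/(\alpha\epsilon))$ per source and $\mathcal{O}(k/(\alpha\epsilon))$ over $\mathcal{S}$ --- this is the $\mathcal{O}(k/(\epsilon\alpha))$ summand.

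For (Ib), first note that each of the $m$ edge events is processed, per source, by the $\mathcal{O}(1)$-time update rules \eqref{eq:p-update-weighted-body}--\eqref{eq:r-v-update-weighted-body} of Thm.~\ref{theorem:dyn-adjust-weighted-graph}, so the pure bookkeeping is $\mathcal{O}(km)$, which is absorbed in $\mathcal{O}(km/\alpha^2)$. The delicate part is the re-push cost triggered by these updates. Reading off \eqref{eq:r-update-weighted-body}--\eqref{eq:r-v-update-weighted-body}, an event $(u,v,\Delta w_{(u,v)})$ re-injects residual of total magnitude $\mathcal{O}\!\big(|\Delta w_{(u,v)}|\,p_s(u)/(\alpha\,d(u))\big)$ into the two coordinates $u$ and $v$; using the invariant of Lemma~\ref{lemma:ppr-invar} to get $p_s(u)\le\|\bm p_s\|_1\le 1$, and assuming unit-magnitude edge events, this is $\mathcal{O}(1/\alpha)$ per event. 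I would then feed this accumulated extra residual mass back through the same potential argument as in (Ia) (each cleanup push removes $\ge\alpha\epsilon\,d(u)$ mass while costing $\mathcal{O}(d(u))$), and split the resulting bound into the part that scales with the number of events --- the $\mathcal{O}(km/\alpha^2)$ term --- and the part that scales with the size of the re-converging support (at most $\mathcal{O}(1/(\alpha\epsilon))$ coordinates, each re-pushed over a neighbourhood whose size averages $\bar{d^t}$ in the current snapshot $\mathcal{G}_t$) --- the $\mathcal{O}(k\bar{d^t}/(\epsilon\alpha^2))$ term. Summing the four contributions gives the claim.

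The main obstacle is making the re-push accounting of (Ib) rigorous: one needs a clean bound on $p_s(u)$ relative to the \emph{time-varying} out-degree $d(u)$ so that the residual re-injected per event is genuinely $\mathcal{O}(1/\alpha)$, and then an amortized argument that remains valid even though the edge events change degrees, supports and the quantity $\bar{d^t}$ between snapshots --- i.e.\ the static forward-push potential argument has to be applied to a moving graph. By comparison, the static bound in (Ia) and the support counting for (II)--(III) are routine.
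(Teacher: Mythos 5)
Your overall decomposition is exactly the paper's: three stages, with the $\mathcal{O}(kTs)$ term charged to \textsc{DynNodeRep} plus the score loop and everything else charged to \textsc{IncrementPush}, summed over the $k$ sources. Your support-size accounting for stages (II)--(III) is in fact cleaner than the paper's (which loosely bounds \textsc{DynNodeRep} by $\mathcal{O}(n)$ before silently tightening to $s$ in the final statement). The real divergence is in stage (I): the paper does not prove the \textsc{IncrementPush} bound at all --- it imports it wholesale as Lemma~\ref{lemma:run-time-increment-push} from \cite{xingzhi2021subset} and multiplies by $k$. You instead try to re-derive it, which is more self-contained but takes on a burden the paper never discharges, and it is precisely there that your sketch does not close.

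Concretely, your (Ib) accounting as written does not land on the stated bound. If each of the $m$ events injects $\mathcal{O}(1/\alpha)$ of residual mass and each cleanup push removes at least $\alpha\epsilon\,d(u)$ mass at cost $\mathcal{O}(d(u))$, the potential argument converts injected mass to work at rate $1/(\alpha\epsilon)$, giving a cleanup cost of $\mathcal{O}\bigl(m/(\alpha^2\epsilon)\bigr)$ --- an extra factor of $1/\epsilon$ over the claimed $\mathcal{O}(m/\alpha^2)$ term. Removing that factor requires the sharper amortized argument of \cite{xingzhi2021subset} (tracking how much of the injected residual actually exceeds the per-node push threshold, which is where the separate $\bar{d^t}/(\epsilon\alpha^2)$ term comes from), and your proposal only gestures at this split rather than performing it; you correctly identify it as the main obstacle. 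Two smaller points: in the weighted setting the cost of \textsc{Push}$(u)$ is proportional to $|\operatorname{Nei}(u)|$, not to the weighted degree $d(u)=\sum_v w_{(u,v)}$, so the potential argument needs these to be identified (unit weights / multi-edges) or handled separately; and the residual re-injected by \eqref{eq:r-update-weighted-body}--\eqref{eq:r-v-update-weighted-body} is bounded using $p_s(u)\le 1$ and $d(u)\ge 1$, which you should state as assumptions. If you simply cite Lemma~\ref{lemma:run-time-increment-push} for stage (I), as the paper does, your proof is complete; as a self-contained derivation it still has a genuine hole at the $m/\alpha^2$ term.
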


Note that the above time complexity is dominated by the time complexity of \textsc{IncrementPush} following from \cite{xingzhi2021subset}. Hence, the overall time complexity is linear to the size of edge events. Notice that in practice, we usually have $s \ll n$ due to the sparsity we controlled in \textsc{DynNodeRep}.

\section{Experiments}
\label{sec:experiments}

In this section, we demonstrate the effectiveness and efficiency of \textsc{DynAnom} for both node-level and graph-level anomaly tasks over three state-of-the-art baselines, followed by one anomaly visualization of ENRON benchmark, and another case study of a large-scale real-world graph about changes in person's life.
\vspace{-1.5mm}
\subsection{Datasets}
\paragraph{DARPA}: The network traffic graph \cite{lippmann20001999}. Each node is an IP address, each edge represents network traffic. Anomalous edges are manually annotated as network attack (e.g., DDoS attack). 
\paragraph{ENRON}\cite{nr-aaai15, cohen2005enron}: The email communication graph. Each node is an employee in Enron Inc. Each edge represents the email sent from one to others. There is no human-annotated anomaly edge or node. Following the common practice used in \cite{yoon2019fast, chang2021f}, we use the real-world events to align the discovered highly anomalous time period.
\paragraph{EU-CORE}: A small email communication graph for benchmarking. Since it has no annotated anomaly, we use two strategies (Type-s and Type-W), similar to \cite{eswaran2018sedanspot},  to inject anomalous edge into the graph. \textbf{EU-CORE-S} includes anomalous edges, creating star-like structure changes in 20 random snapshots. Likewise, \textbf{EU-CORE-L:} is injected with multiple edges among randomly selected pair of nodes, simulating  node-level local changes.  We describe details in appendix. \ref{sec:appendix-data}. 
\paragraph{Person Knowledge Graph (PERSON)}: We construct PERSON graph from EventKG \cite{gottschalk2019eventkg, gottschalk2018eventkg}. Each node represents a person or an event. The edges represent a person's participation history in events with timestamps. The graph has no explicit anomaly annotation. Similarly, we align the detected highly anomalous periods of a person with the real-world events, which reflects the person's significant change as his/her life events update. We summarize the statistics of dataset in the following table:

\xingzhiswallow{
\begin{table}[ht]
        \centering
        \small
\begin{tabular}{p{0.05\textwidth}|p{0.05\textwidth}p{0.07\textwidth}p{0.05\textwidth} p{0.06\textwidth} p{0.06\textwidth}}
\toprule 
 Dataset & DARPA & EU-CORE & ENRON & PERSON & COUNTRY \\
\midrule 
 $\displaystyle |V|$ & 25,525 & 986 & 151 & 609,930 & 206 \\
\hline 
 $\displaystyle |E|$ & 4,554,344 & 333,734 & 50,572 & 8,782,630 & 2,013,931 \\
\hline 
 $|\mathcal{G}_{0,...,T}|$ & 1463 & 115 & 1138 & 43 & 43 \\
\hline 
 Init. $t$ & 256 & 25 & 256 & 20  & 11 \\
\bottomrule 
\end{tabular}
\caption{Dataset Statistics. We convert them into undirected graph by repeating edges with reversed source and destination node.}
\end{table}

}

\vspace{-1.5mm}
\begin{table}[ht]
\centering
\caption{Dataset Statistics. Graphs are converted into undirected by repeating edges with reversed source and destination node. We detect anomalies after the initial snapshot (Init. $t$).}
\vspace{-3mm}
\begin{tabular}{p{0.05\textwidth}|p{0.08\textwidth}p{0.08\textwidth}p{0.08\textwidth} p{0.08\textwidth} }
\toprule 
 Dataset & DARPA & EU-CORE & ENRON & PERSON  \\
\midrule 
 $\displaystyle |V|$ & 25,525 & 986 & 151 & 609,930  \\
\hline 
 $\displaystyle |E|$ & 4,554,344 & 333,734 & 50,572 & 8,782,630 \\
\hline 
 $|\mathcal{G}_{0,...,T}|$ & 1463 & 115 & 1138 & 43  \\
\hline 
 Init. $t$ & 256 & 25 & 256 & 20   \\
\bottomrule 
\end{tabular}
\end{table}

\vspace{-2mm}
\subsection{Baseline Methods}
We compare our proposed method to four representative state-of-the-art methods over dynamic graphs:
\paragraph{Graph-level method: AnomRank}\cite{yoon2019fast} \footnote{We discovered potential bugs in the official code, which make the results different from the original paper. We have contacted authors for further clarification.} uses the global Personalized PageRank and use 1-st and 2-nd derivative to quantify the anomaly score of graph-level.  We use the node-wise AnomRank score for the node-level anomaly,
\paragraph{Edge-level method: SadenSpot} \cite{eswaran2018sedanspot} approximates node-wise PageRank changes before and after inserting edges as the edge-level anomaly score. 
\paragraph{Node-level method: NetWalk} \cite{yu2018netwalk} incrementally updates the node embeddings by extending random walks. However, since the \textit{anomaly} defined in the original paper is the outlier in each static snapshot, we re-alignment \cite{gower1975generalized, wang2008manifold} NetWalk embeddings and apply $\ell_2$-distance for node-level anomaly. \footnote{We use Procrustes analysis to find optimal scale, translation and rotation to align}
\paragraph{Node-level DynPPE} \cite{xingzhi2021subset} used the local node embeddings based on PPVs for unweighted graph. We extend its core PPV calculation algorithm for weighted graphs. The detailed hyper-parameter settings are listed in Table \ref{tab:param-config} in appendix. In the following experiments, we demonstrate that our proposed algorithm outperforms over the strong baselines.

\subsection{Exp1: Node-level Anomaly Localization}
\label{sec:exp1}

\paragraph{Experiment Settings:} As detailed in Def.\ref{def:node-track-def}, given a sequence of graph snapshots $  \mathcal{\bm G} = \{ \mathcal{G}_0, \mathcal{G}_1, ..., \mathcal{G}_T \}$ and a tracked node subset $\bm S = \{ u_0, u_1, ..., u_i \}$. 
We denote the set of anomalous snapshots of node u as ground-truth, such that $\bm Y_u$, containing $k_u$ timestamps where there is at least one new anomalous edge connected to node $u$ in that time.
We calculate the node anomaly score for each node. For example for node u: $\bm \delta_u = \{ \hat \delta_u^1, \hat \delta_u^2, \ldots, \hat \delta_u^T \}$, and rank its anomaly scores across time. 
Finally, we use the set of the top-$k_u$ highest anomalous snapshots $ \hat{\bm Y_u} $ as the predicted anomalous snapshot \footnote{In experiment, we assume that the number of anomalies is known for each node to calculate the prediction precision.} and calculate the averaged prediction precision over all nodes as the final performance measure as shown below:
\vspace{-1mm}
\begin{align}
    Precision_{avg} = \frac{1}{|\bm S|} \sum_{u \in S}\frac{ \hat{\bm Y_u} \cap \bm  Y_u  }{ |\hat{\bm Y_u}| } \nonumber
\end{align}
\vspace{-1mm}
For DARPA dataset, we track 151 nodes\footnote{We track totally 200 nodes, but 49 nodes have all anomalous edges before the initial snapshot, so we exclude them in precision calculation.} which have anomalous edges after initial snapshot. Similarly we track 190,170 anomalous nodes over EU-CORE-S and EU-Core-L.
We exclude edge-level method SedanSpot for node-level task because it can not calculate node-level anomaly score. We present the precision and running time in Table \ref{tab:node-anomaly-loc-accuracy} and Table \ref{tab:time-node-anomaly}.

\begin{table}[ht]
    \centering
    \caption{The average precision of node-level anomaly localization. Our proposed \textsc{DynAnom} outperforms other baselines.}
    \vspace{-2mm}
    \begin{tabular}{lrrr}
    \toprule 
      & DARPA & EU-CORE-S  & EU-CORE-L  \\
    \midrule 
     Random  & 0.0075 & 0.0142 & 0.0088 \\
    \hline 
     AnomRank & 0.2790 & 0.2173 & 0.4019 \\
    \hline 
     AnomRankW & 0.2652 & 0.2213 & 0.4078 \\
    \hline 
     NetWalk  & OOM & 0.0421 & 0.0416 \\
    \hline 
     DynPPE &  0.1701 & 0.2478 & 0.2372 \\
    \hline 
     \textbf{DynAnom} & \textbf{0.5425} & \textbf{0.4242} & \textbf{0.5215} \\
     \bottomrule
    \end{tabular}
\label{tab:node-anomaly-loc-accuracy}
\end{table}
\vspace{-2mm}

\paragraph{Effectiveness.} In Table \ref{tab:node-anomaly-loc-accuracy}, the low scores of \textit{Random}\footnote{For Random baseline, we randomly assign anomaly score for each node across snapshots} demonstrate the task itself is very challenging. We note that our proposed \textsc{DynAnom} outperforms all other baselines, demonstrating its effectiveness for node-level anomaly tracking.

\paragraph{Scalability.} NetWalk hits Out-Of-Memory (OOM) on DARPA dataset due to the Auto-encoder (AE) structure where the parameter size is related to $|\mathcal{V}|$. Specifically, the length of input/output one-hot vector is $|\mathcal{V}|$, making it computationally prohibitive when dealing with graphs of even moderately large node set.  Moreover, since the AE structure is fixed once being trained, there is no easy way to incrementally expand neural network to handle the new nodes in dynamic graphs. While \textsc{DynAnom} can easily track with any new nodes by adding extra dimensions in $\bm p_s, \bm r_s$ and $\bm d$ and apply the same procedure in an online fashion.

\paragraph{Power of Node-level Representation.} The core advantage of \textsc{DynAnom} over AnomRank(W) is the power of node-level representation. AnomRank(W) essentially represents individual node's status by one element in the PageRank vector. While \textsc{DynAnom} uses node representation derived from PPVs, thus having far more expressivity and better capturing  node changes over time.

\paragraph{Advantage of Aligned Space.} Despite that NetWalk hits OOM on DARPA dataset, we hypothesize that the reason of its  poor performance on EU-CORE\{S,L\} is the embedding misalignment. The incremental training makes the embedding space not directly comparable from time to time even after extra alignment. Although it's  useful for classification or clustering within one snapshot, the misaligned embeddings may cause troublesome overhead for downstream tasks, such as training separate classifiers for every snapshot.

\paragraph{Importance of Edge Weights.} Over all datasets, our proposed \textsc{DynAnom} consistently outperforms DynPPE, which is considered as the unweighted counterpart of \textsc{DynAnom}. It demonstrates the validity and versatility of our proposed framework.

\vspace{-2mm}
\begin{table}[ht]
\centering
\caption{For comparing running time (in seconds), excluding the time for data loading and graph update. }
\vspace{-2mm}
\begin{tabular}{lrrr}
\toprule 
  & DARPA & EU-CORE-S & EU-CORE-L \\
\midrule 
 AnomRank(W) & 905.983  &  26.084& 26.865 \\
\hline 
 NetWalk & OOM &  3649.14 &  3644.58\\
\hline 
DynPPE &  84.247 &  33.835 & 30.933 \\
\hline 
 DynAnom & 379.334 & 30.054 & 26.359 \\
 \bottomrule
\end{tabular}
\label{tab:time-node-anomaly}
\end{table}
\vspace{-2mm}

\paragraph{Efficiency.} Table \ref{tab:time-node-anomaly} presents the wall-clock time \footnote{We track function-level running time by CProfile, and remove those time caused by data loading and dynamic graph structure updating}. Deep learning based NetWalk is the slowest. Although it can incrementally update random walk for the new edges, it has to  re-train or fine-tune for every snapshot. 
At the first glance, DynPPE achieves the shortest running time on DAPRA dataset, but it is an illusive victory caused by dropping numerous multi-edge updates at the expense of precision. Furthermore, the Power Iteration-based AnomRank(W) is slow on DARPA as we discussed in Section \ref{sec:notation-prelim}, but the speed difference may not be evident when the graph is small (e.g., DARPA has 25k nodes, and EU-CORE has less than 1k nodes). Moreover, our algorithm can independently track individual nodes as a result of its local characteristic,  it could be further speed up by embarrassingly parallelization on a computing cluster with massive cores.



\vspace{-1mm}
\subsection{Exp2: Graph-level Anomaly Detection }
\label{sec:exp2}
\paragraph{Experiment Setting}:
Given a sequence of graph snapshots $ \mathcal{ \bm G} = \{ \mathcal{G}_0, \mathcal{G}_1, ..., \mathcal{G}_T \}$, we calculate the snapshot anomaly score and consider the top ones as anomalous snapshots, adopting the same experiment settings in \cite{yoon2019fast}. 
We use both $\ell_1$ and $\ell_2$ as distance function to test the practicality of \textsc{DynAnom}.
We modify edge-level SedanSpot for Graph-level anomaly detection by aggregating edge anomaly score with  $\textsc{Mean}()$ \footnote{ For a fair comparison, we tried $\{ Mean(), Median(), Min(), Max() \}$ operators for SedanSpot, and found that $Mean()$ yields the best results.} in each snapshot. We describe detailed hyper-parameter settings in Appendix. \ref{sec:appendix-param-config}.
For DAPRA dataset, as suggested by \citet{yoon2019fast}, we take the precision at top-$250$ \footnote{ $k=250$ reflects its practicality since it is the closest $k$ without exceeding total 288 graph-level anomalies} as the performance measure in Table \ref{tab:top-k-precision}, and present Figure \ref{fig:darpa-rank-topk-precision} of recall-precision curves as we vary the parameter $k \in \{ 50,100,\ldots,800 \}$  for top-$k$ snapshots considered to be anomalous. 

\paragraph{Practicality for Graph-level Anomaly}: 
The precision-recall curves show that \textsc{DynAnom} have a good trade-off, achieving the competitive second/third highest precision by predicting the highest top-$150$ as anomalous snapshots, and the precision decreases roughly at a linear rate when we includes more top-$k$  snapshots as anomalies. 
Table \ref{tab:top-k-precision} presents the precision at top-$250$, and we observe that \textsc{DynAnom} could outperform all other strong baselines, including its unweighted counterpart -- \textsc{DynPPE}. It further demonstrates the high practicality of the proposed flexible framework even with the simplest $\ell_1 \text{and }\ell_2$ distance function. 
 \vspace{-1mm}
\begin{table}[ht]
	\begin{minipage}{0.6\linewidth}
		\centering
		\includegraphics[width=1\textwidth]{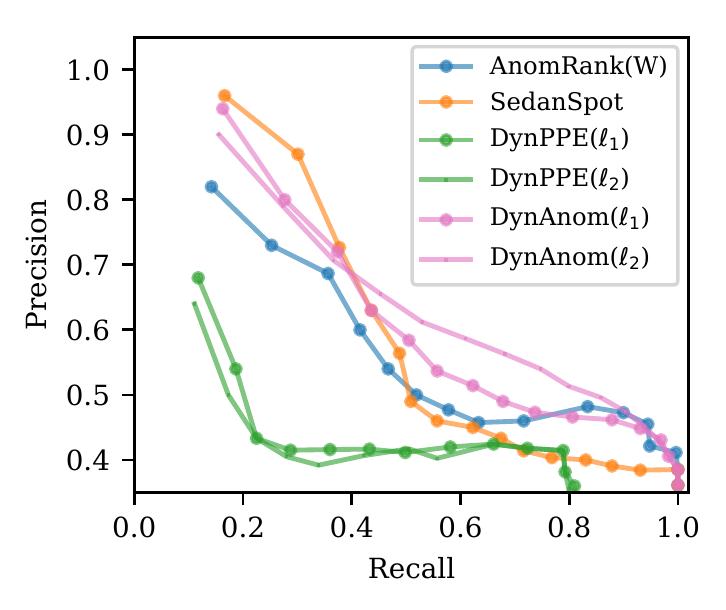}
		\vspace{-10mm}
		\captionof{figure}{Precision-Recall Curve}
		\label{fig:darpa-rank-topk-precision}
	\end{minipage}%
	\hfill
	\begin{minipage}{0.39\linewidth}
	\small
		\caption{The precision of top-250 anomalies}
		\vspace{-4mm}
		\label{tab:top-k-precision}
		\centering
        \begin{tabular}{lr}
            \toprule
                Algorithms &  Precision \\
            \midrule
                NetWalk &     OOM  \\
                AnomRankW &     0.5400  \\
                SedanSpot &     0.5640  \\
                DynPPE($\ell_1$) &     0.4160 \\
                DynPPE($\ell_2$) &     0.3920 \\
                DynAnom($\ell_1$) &     0.5840 \\
                \textbf{DynAnom($\ell_2$)} &     \textbf{0.6120} \\
            \bottomrule
        \end{tabular}
	\end{minipage}

\end{table}
\vspace{-1mm}




ENRON graph records the emails involved in the notorious Enron company scandal, and it has been intensively studied in \cite{eswaran2018sedanspot,chang2021f, yoon2019fast}.  Although ENRON graph has no explicit anomaly annotation, a common practice \cite{chang2021f,yoon2018fast} is to visualize the detected anomaly score against time and discover the associated real-world events. Likewise, Figure \ref{fig:enron-curve} plots the detected peaks side-by-side with other strong baselines \footnote{All values are post-processed by dividing the maximum value. For SedanSpot, we use $Mean()$ to aggregate edge anomaly score for the best visualization. Other aggregator (e.g., $Max()$) may produce flat curve, causing an unfair presentation. }. We find that \textsc{DynAnom} shares great overlaps with meaningful peaks and detects more prominent peak at Marker.1 when Enron got investigated for the first time,  demonstrating its effectiveness for sensible change discovery . 
For better interpretation, we annotate the peaks associated to the events collected from the Enron Timeline \footnote{Full events are included in Table \ref{tab:enron-events} in Appendix. Enron timeline: \url{https://www.agsm.edu.au/bobm/teaching/BE/Enron/timeline.html}}, and present some milestones as followings:
\begin{itemize}
\item[1]:(2000/08/23) Investigated Enron  when stock was high.
\item[6]:(2001/08/22) CEO was warned of accounting irregularities.
\item[10]:(2002/01/30) CEO was replaced after bankruptcy.
\end{itemize}

\vspace{-2mm}
\begin{figure}[htbp]
    \centering
    \includegraphics[width=0.47\textwidth]{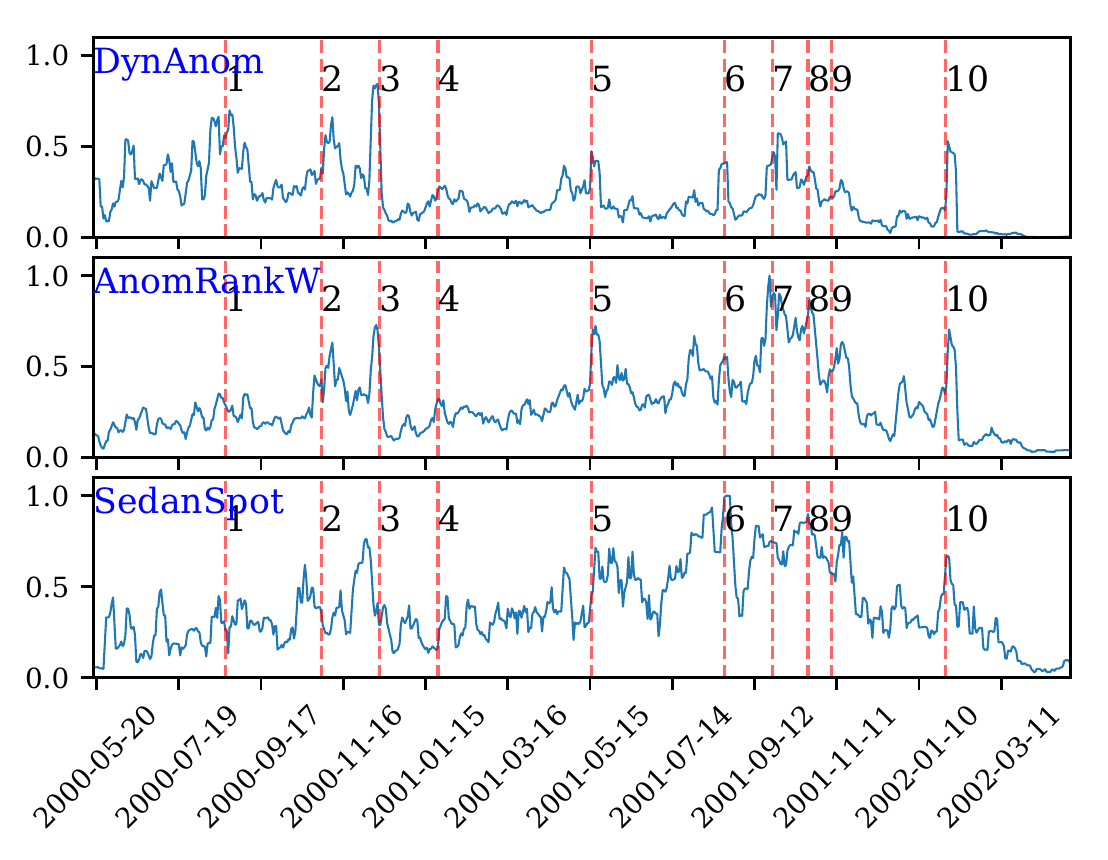}
    \vspace{-5mm}
    \caption{ The anomaly scores of ENRON graph. \textsc{DynAnom} well correlates with other baselines, and detect novel peaks at beginning.}
    \label{fig:enron-curve}
\end{figure}






\xingzhiswallow{

\begin{figure}[ht]
    \centering
    \includegraphics[width=0.35\textwidth]{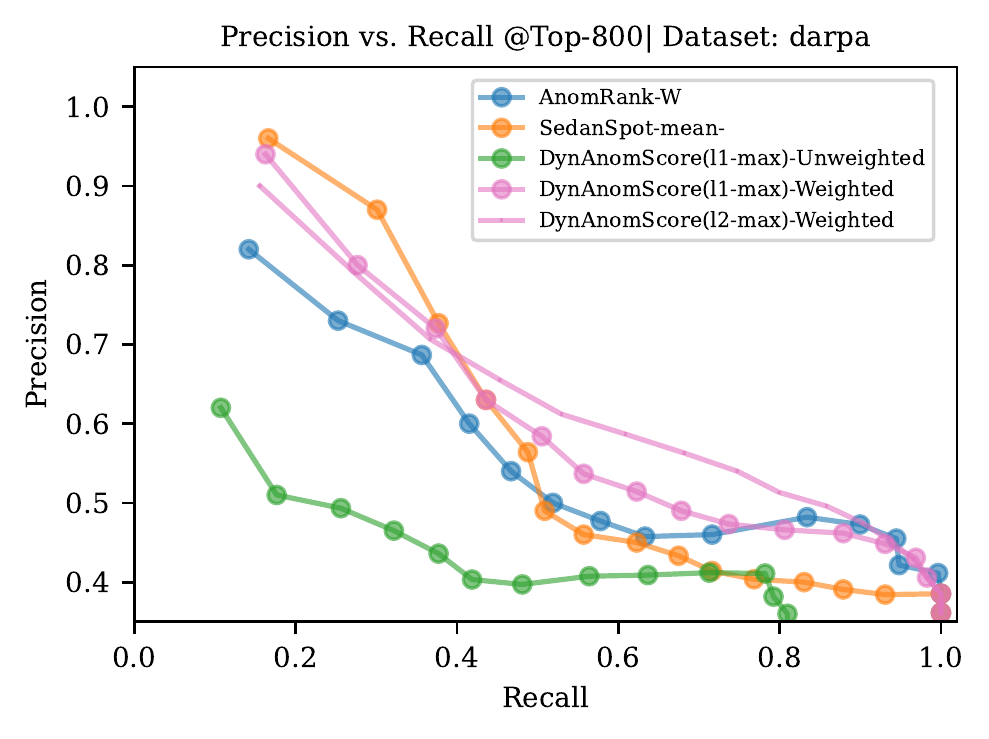}
    \caption{DARPA network: Precision-Recall for top-800}
    \label{fig:darpa-rank-topk-precision}
\end{figure}

\begin{table}[ht]
    \centering
    \small
\begin{tabular}{lr}
\toprule
          Anomaly score &  DARPA \\
\midrule
    NetWalk &     OOM  \\
    AnomRankW &     0.5400  \\
    SedanSpot &     0.5640  \\
    DynPPE &     0.4360 \\
    \textbf{DynAnom} &     \textbf{0.6120} \\

\bottomrule
\end{tabular}
    \caption{The precision of top-250 detected anomalous snapshots on DARPA dataset.}
\end{table}

}

\xingzhiswallow{

    \begin{figure*}[ht]
    \begin{subfigure}{0.5\textwidth}
       \centering
        \includegraphics[width=1.0\linewidth]{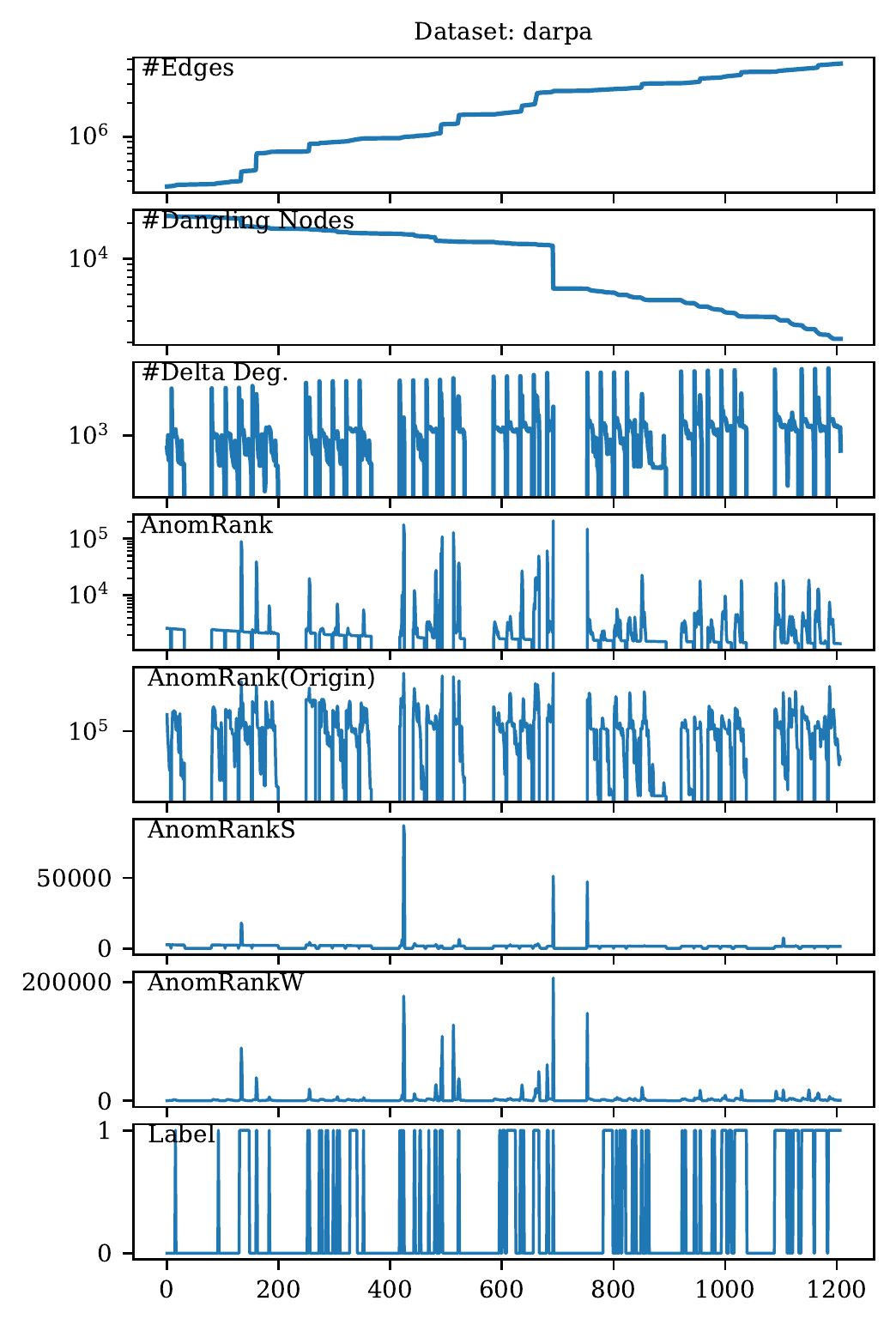}
        \caption{DARPA network: AnomRank Anomaly Scores vs. Ground-truth. }
        \label{fig:darpa-anomrank-scores}
    \end{subfigure}\hfill
    \begin{subfigure}{.5\textwidth}
        \centering
        \includegraphics[width=1.0\linewidth]{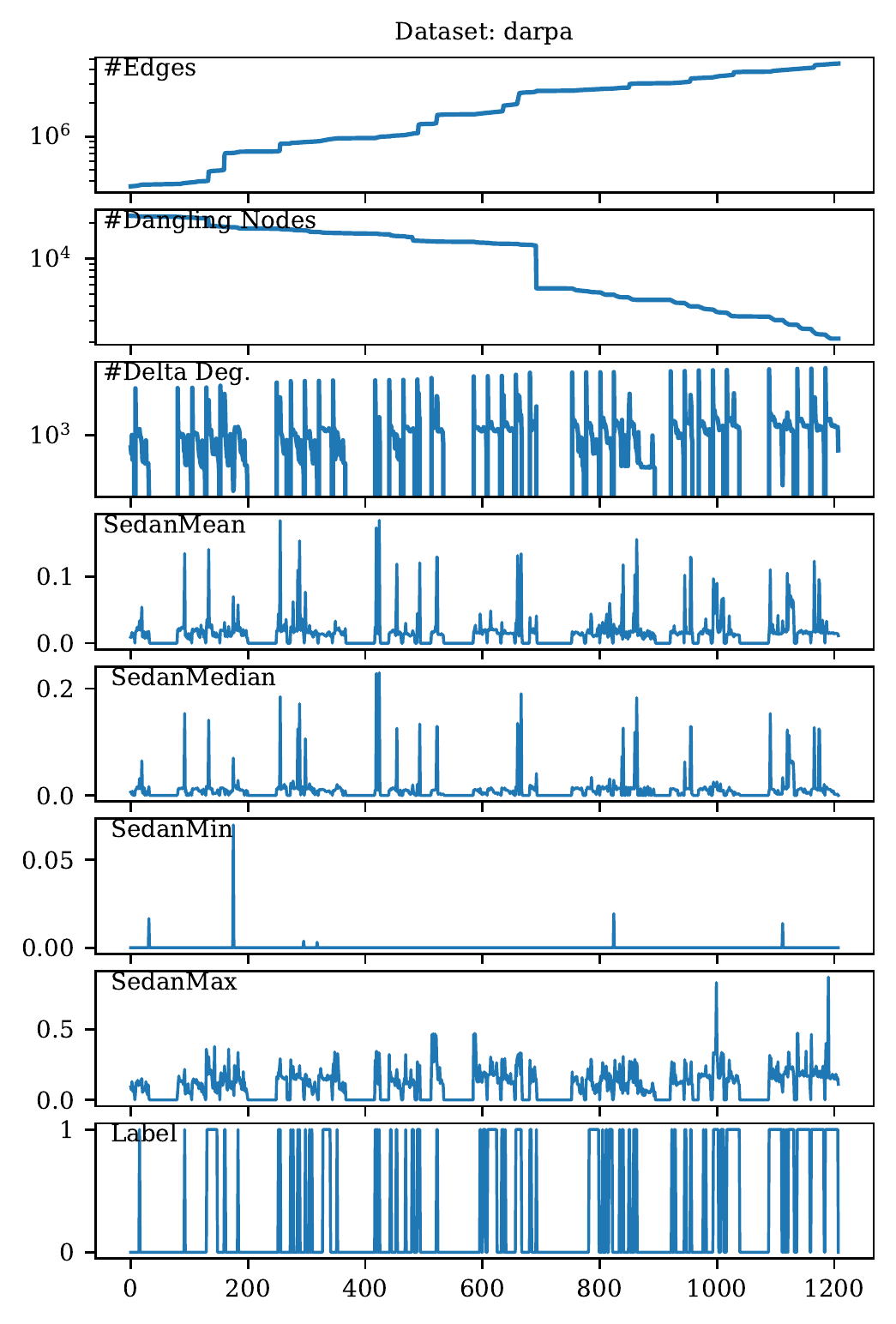}
        \caption{DARPA network: SedanSpot Anomaly Scores vs. Ground-truth}
        \label{fig:darpa-sedanSpot-scores}
    \end{subfigure}
    \caption{The baselines on DARPA dataset}
    \label{fig:darpa-score-timeline}
    \end{figure*}

\begin{figure*}[ht]
\begin{subfigure}{0.5\textwidth}
    \centering
    \includegraphics[width=0.7\linewidth]{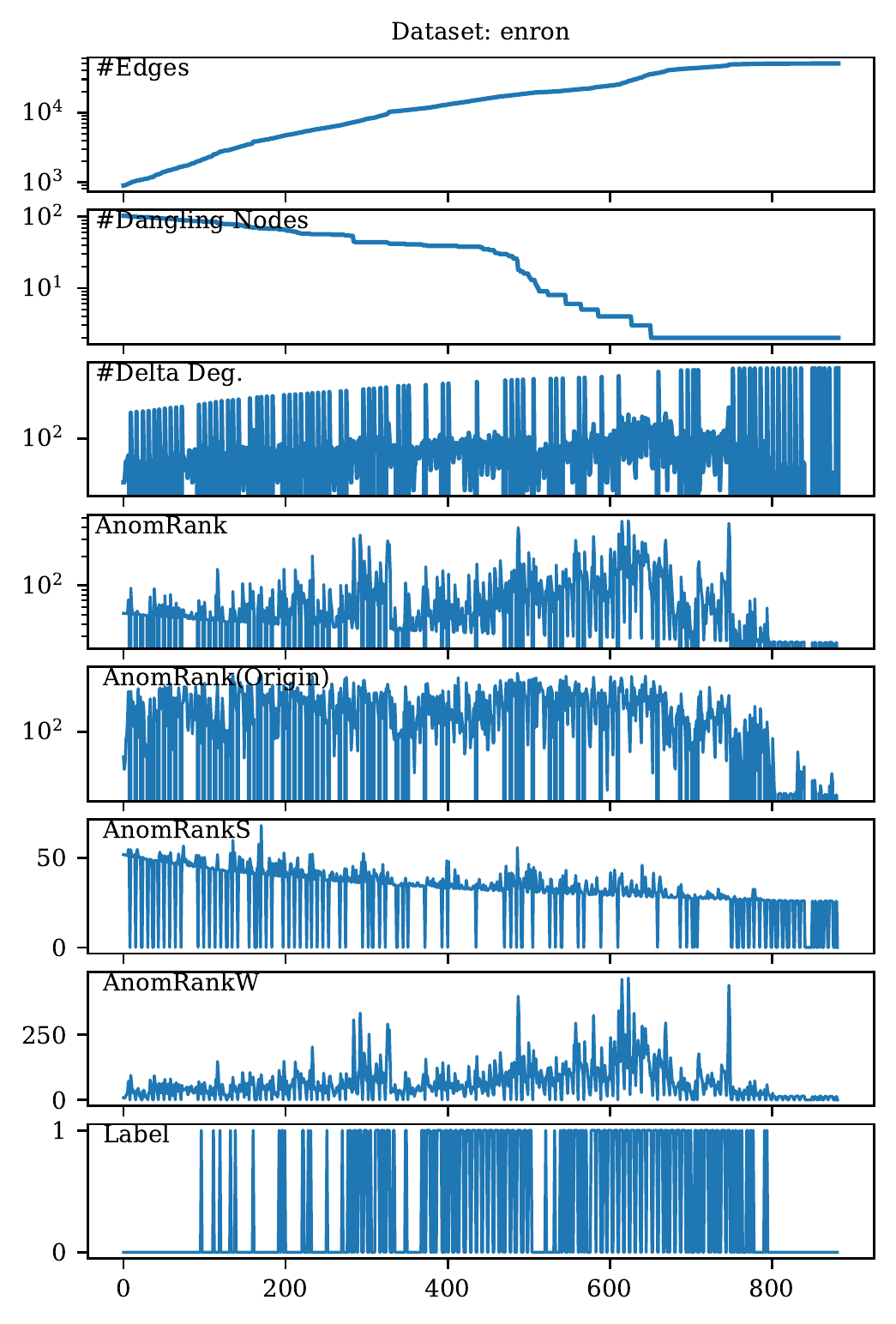}
    \caption{Enron network: AnomRank Anomaly Scores (No ground truth).}
    \label{fig:enron-anomrank-scores}
\end{subfigure}\hfill
\begin{subfigure}{.5\textwidth}
    \centering
    \includegraphics[width=0.7\linewidth]{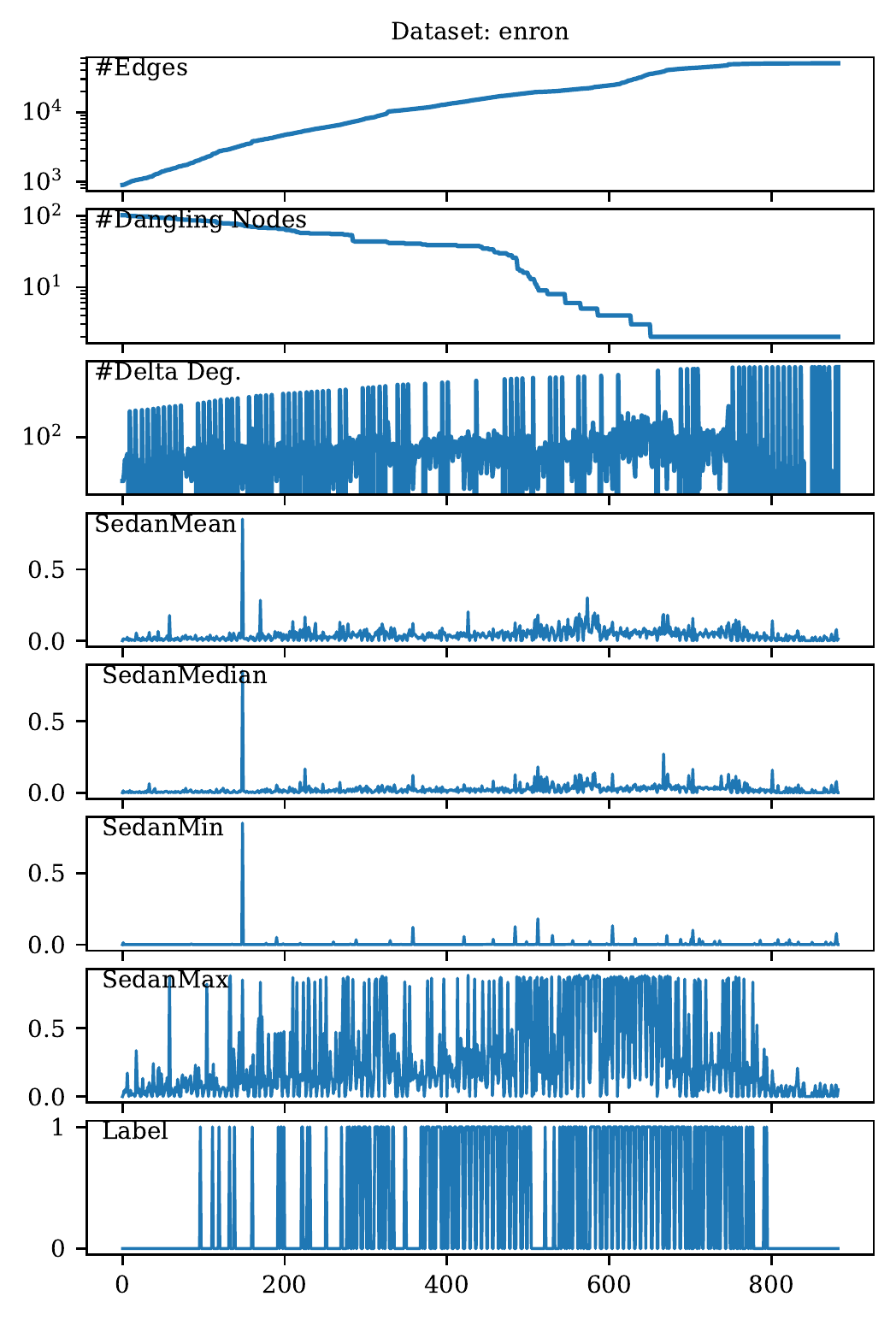}
    \caption{Enron network: SedanSpot Anomaly Scores (No ground truth).}
    \label{fig:enron-sedanSpot-scores}
\end{subfigure}
\begin{subfigure}{.5\textwidth}
    \centering
    \includegraphics[width=0.7\linewidth]{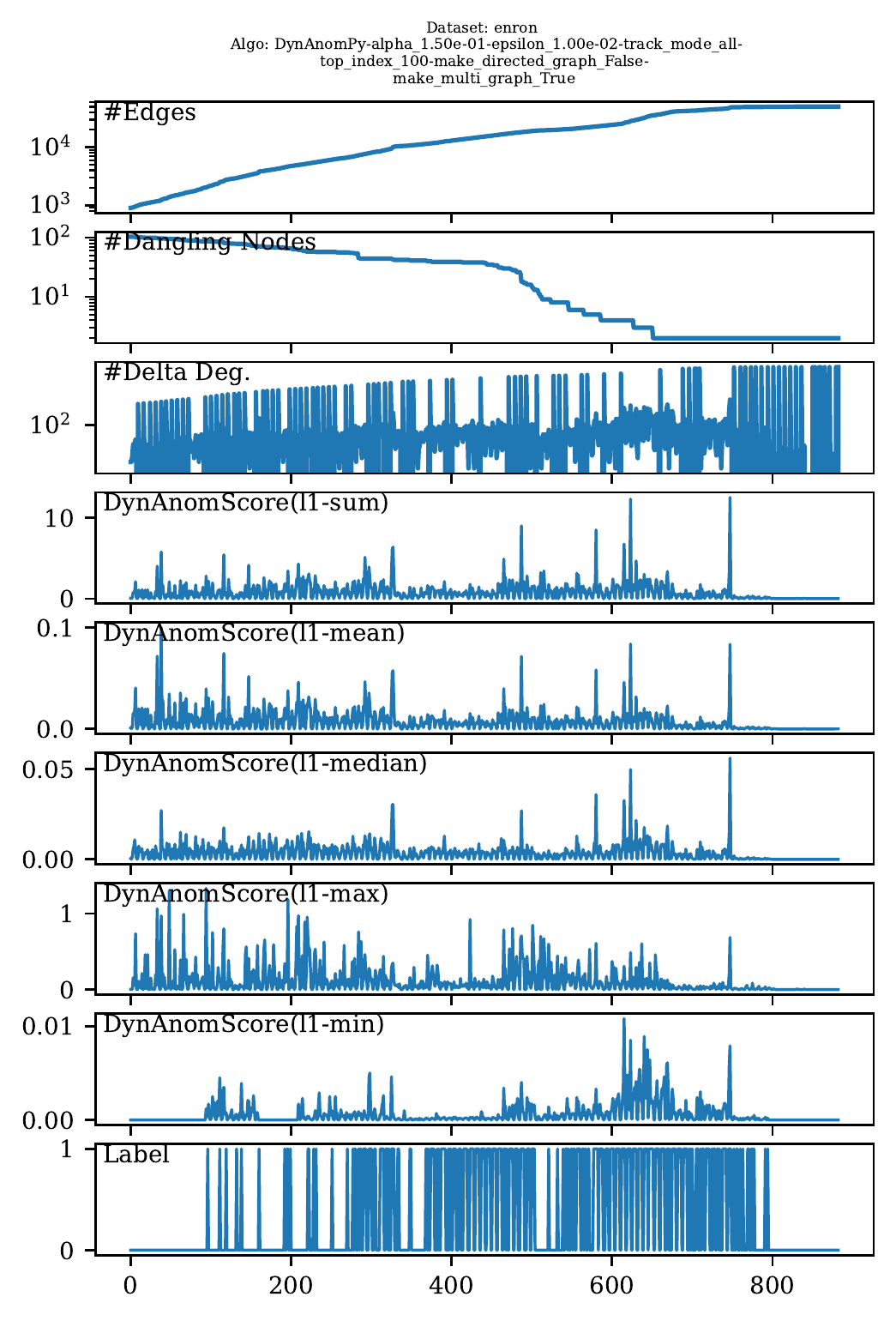}
    \caption{Enron network: \textcolor{red}{The Proposed} DynAnom Anomaly Scores (No ground truth).}
    \label{fig:enron-dynAnom-scores}
\end{subfigure}\hfill
\begin{subfigure}{.5\textwidth}
    \centering
    \includegraphics[width=1.3\linewidth]{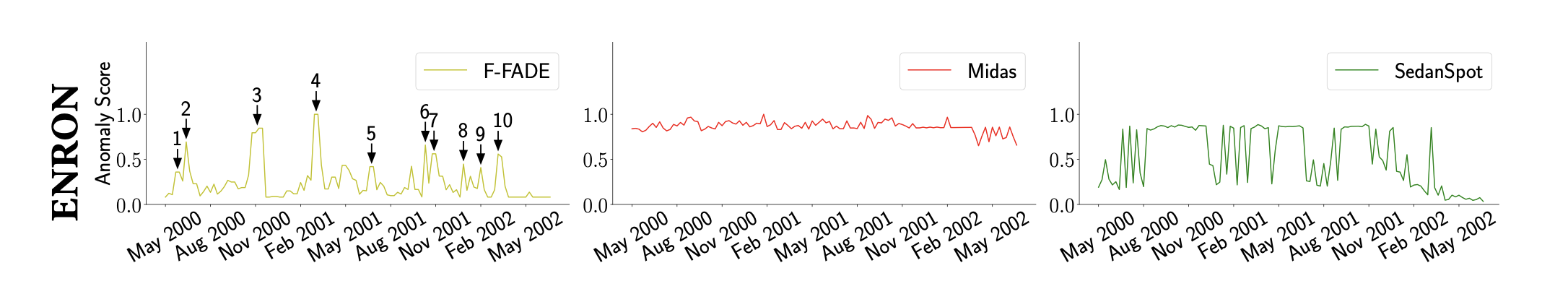}
    \caption{Enron network: From F-Fade, which is believed to be the best detection over Enron dataset.}
    \label{fig:enron-f-fade-scores}
\end{subfigure}

\caption{The baselines on Enron dataset. Our proposed method detects anomalies as peaks shown in Figure \ref{fig:enron-dynAnom-scores}}.
\label{fig:enron-score-timeline}
\end{figure*}

}

\xingzhiswallow{
\begin{figure}[ht]
    \centering
    \includegraphics[width=0.7\linewidth]{figs/fig-res-AnomRankC-darpa-anomaly-scores.pdf}
    \caption{DARPA network: Anomaly Scores vs. Ground-truth. }
    \label{fig:darpa-anomrank-scores}
\end{figure}

\begin{figure}[ht]
    \centering
    \includegraphics[width=0.7\linewidth]{figs/fig-res-SedanSpotC-darpa-anomaly-scores.pdf}
    \caption{Enron network: SedanSpot Anomaly Scores (No ground truth).}
    \label{fig:darpa-sedanSpot-scores}
\end{figure}

}

\xingzhiswallow{
\begin{figure}[ht]
    \centering
    \includegraphics[width=0.7\linewidth]{figs/fig-res-AnomRankC-enron-anomaly-scores.pdf}
    \caption{Enron network: Anomaly Scores (No ground truth).}
    \label{fig:enron-anomrank-scores}
\end{figure}

\begin{figure}[ht]
    \centering
    \includegraphics[width=0.7\linewidth]{figs/fig-res-SedanSpotC-enron-anomaly-scores.pdf}
    \caption{Enron network: SedanSpot Anomaly Scores (No ground truth).}
    \label{fig:enron-sedanSpot-scores}
\end{figure}

}



\subsection{Case Study: Localize Changes in Person's life over Real World Graph}

To demonstrate the scalability and usability of \textsc{DynAnom},  we present an interesting case study over our constructed large-scale PERSON graph, which records the structure and intensity of person-event interaction history. 
We apply \textsc{DynAnom} to track public figures (Joe Biden, Arnold Schwarzenegger, Al Franken) of the U.S. from 2000 to 2022 on a yearly basis, and visualize their changes by $\ell_1$-distance, together with annotated peaks in Figure \ref{fig:eventkg-person}. 

\vspace{-3mm}
\begin{figure}[ht]
    \centering
    \includegraphics[width=1.0\linewidth]{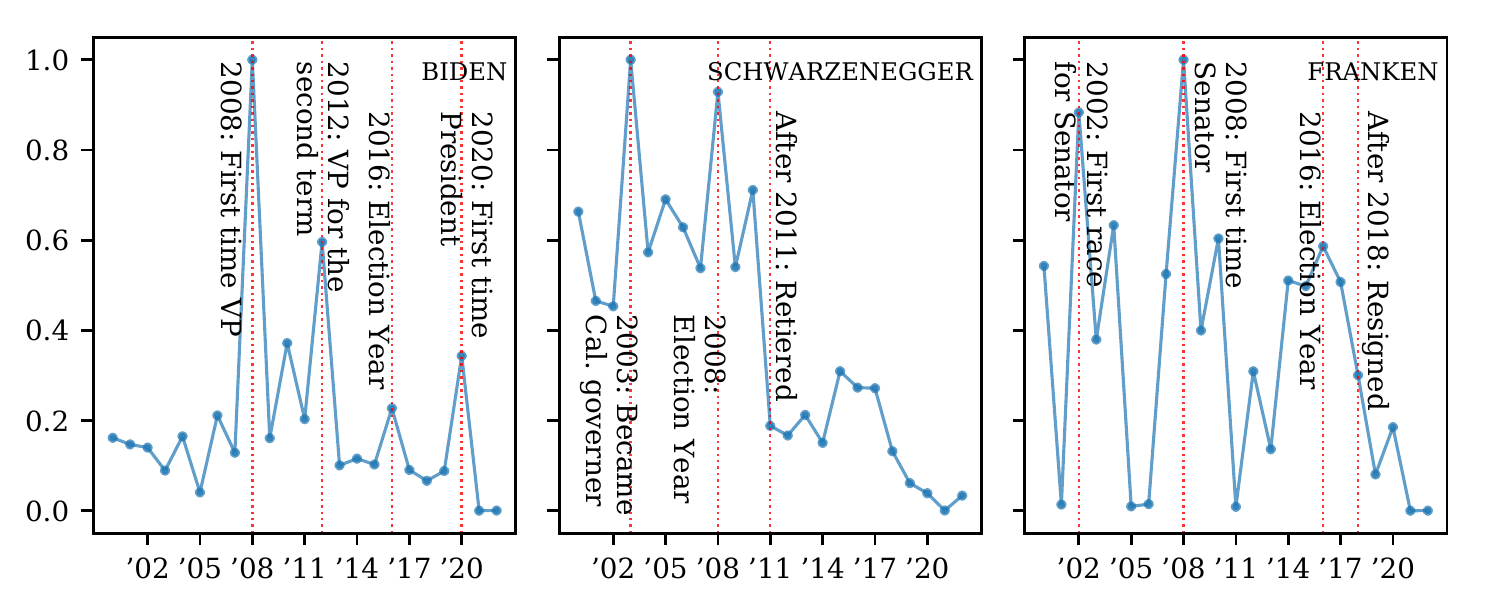}
    \vspace{-8mm}
    \caption{The yearly changes of public figures in PERSON graph from 2000-2022: The detected peaks are well correlated to the major events of the individuals as we annotated. }
    \label{fig:eventkg-person}
\end{figure}



\xingzhiswallow{
\textit{Biden} has his greatest peak in year 2008 when he became the 47-th U.S. Vice President in Obama's administration for the very first time. The second peak occurs in 2012 when he won the re-election. This time the peak has smaller magnitude because he was already VP, the re-election caused slightly less difference in him when compared to his transition in 2008. The third peak is in 2020 when he won the U.S. Presidential election.
\textit{Arnold Schwarzenegger} won the California gubernatorial election as his first politician role in 2013. This has been captured by the highest peak in node status change although the degree change is not the greatest. In the year 2008, Arnold began campaigning with McCain as the key endorsement for McCain's presidential campaign.  After 2011, Arnold reached his term limit as Governor and returned to acting.
\textit{Al Franken} is a famous comedian who shfited career to be a politician, the peaks well capture the time when he entered political area and went through elections in each year. Table \ref{tab:person-events} lists full events in appendix.
}

As we can see, \textit{Biden} has his greatest peak in year 2008 when he became the 47-th U.S. Vice President for the very first time, which is considered to be his first major advancement. The second peak occurs in 2012 when he won the re-election. This time the peak has smaller magnitude because he was already the VP, thus the re-election caused less difference in him, compared to his transition in 2008. The third peak is in 2020 when he became the president, the  magnitude is even smaller as he had been the VP for 8 years without huge context changed. 
Similarly, the middle sub-figure captures  \textit{Arnold Schwarzenegger}'s transition to be California Governor in 2003, and high activeness in election years. 
%
%
\textit{Al Franken} is also a famous comedian who shifted career to be a politician in 2008, the peaks well capture the time when he entered political area and the election years. Table \ref{tab:person-events} lists full events in Appendix \ref{sec:appendix-timelines}. This resources could bring more opportunities for knowledge discovery and anomaly mining studies in the graph mining community.

Moreover, based on our current Python implementation, it took roughly 37.8 minutes on a 4-core CPU machine to track the subset of three nodes over the PERSON graph, which has more than 600k nodes and 8.7 million edges.  Both presented results demonstrate that our proposed method has great efficiency and usability for practical subset node tracking.







\xingzhiswallow{
\textbf{COUNTRY Graph} records the conflict between countries and entities. We investigated China and the U.S. and found interesting peaks associated to global/regional tension as presented in Figure \ref{fig:gdelt-china-us} and full event list in Table \ref{tab:country-events} in appendix. 

\vspace{-2mm}
\begin{figure}[ht]
    \centering
    \includegraphics[width=1.0\linewidth]{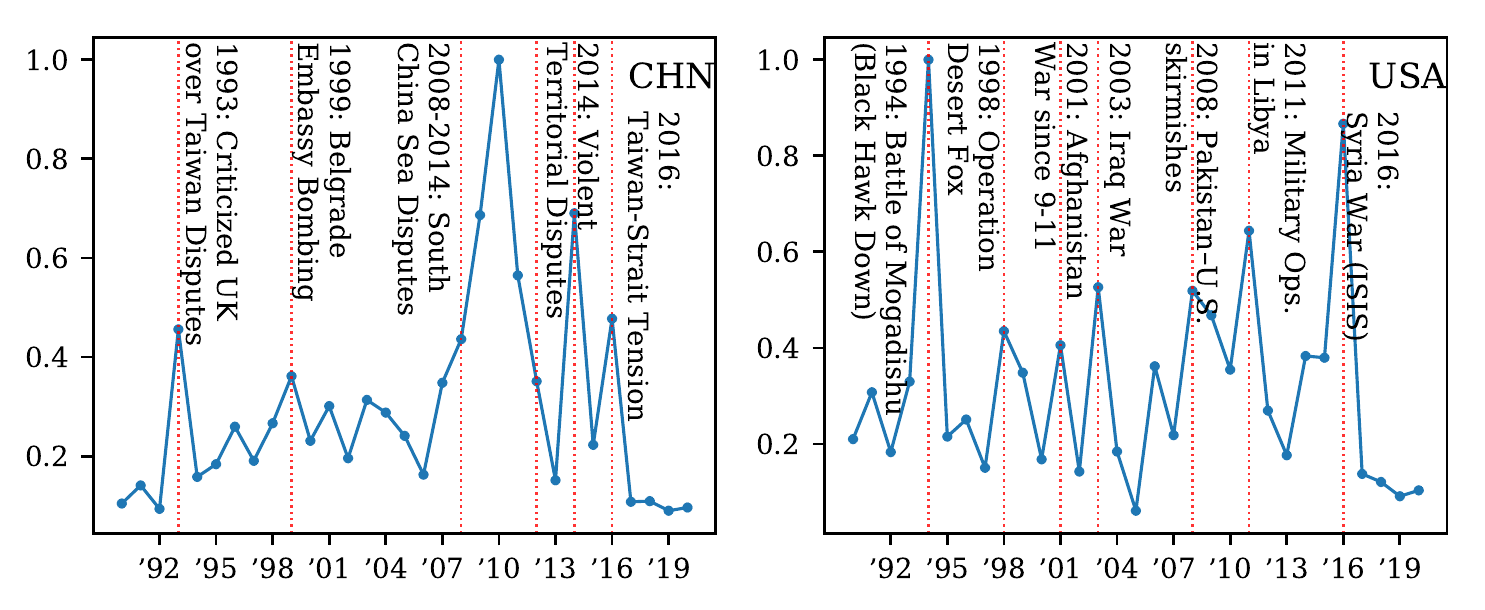}
    \vspace{-8mm}
    \caption{ The yearly changes of two countries in  \textit{COUNTRY} graph from 1990-2020 : The detected peaks reflect the major conflicts as the countries shift their diplomatic/militant behaviors.  }
    \label{fig:gdelt-china-us}
\end{figure}
}


\section{Discussion and Conclusion}
\label{sec:conclusion}

In this paper, we propose an unified framework \textsc{DynAnom} for subset node anomaly tracking over large dynamic graphs. This framework can be easily applied to different graph anomaly detection tasks from local to global with a flexible score function customized for various application. 
Experiments show that our proposed framework outperforms current state-of-the-art methods by a large margin, and has a significant speed advantage (2.3 times faster) over large graphs.  
We also present a real-world PERSON graph with an interesting case study about personal life changes, providing a rich resource for both knowledge discovery and algorithm benchmarking.
For the future work, it remains interesting to explore different type of score functions, automatically identify the interesting subset of nodes as better strategies for tracking global-level anomaly, and further investigate temporal-aware PageRank as better node representations.


\begin{acks}
This work was partially supported by NSF grants IIS-1926781, IIS-1927227, IIS-1546113 and OAC-1919752.
\end{acks}

\bibliographystyle{ACM-Reference-Format}
\balance
\bibliography{references}

\clearpage
\appendix
 
\section{Proof of Thm. \ref{theorem:dyn-adjust-weighted-graph}}
\label{appendix-proof}
\begin{proof}
By the invariant property in Lemma \ref{lemma:ppr-invar}, we have
\begin{align}
p_s(i) + \alpha r_s(i) = (1-\alpha)\sum_{x \in N^{in}(i)}  \frac{ w_{(x,i)} p_s(x)}{d(x)} + \alpha \times 1_{i=s}, \forall i \in \mathcal{V},  \label{eq:u-invaraint-appendix} 
\end{align}

Initially, this invariant holds and keeps $ \frac{r_s(i)}{d(i)} \leq \epsilon$ after applying Algo.\ref{algo:forward-local-push-weighted}. 
When the new edge $(u,v,\Delta w_{(u,v)})$ arrives, it changes the out-degree of $u$, breaking up the balance for all invariant involving $d(u)$. Our goal is to recover such invariant by adjusting small amount of $p_s, r_s$. While such adjustments may compromise the quality of PPVs, we could incrementally invoke Algo.\ref{algo:forward-local-push-weighted} to update $p_s$ and $r_s$ for better accuracy afterwards. We denote the initial vectors as $p_s, r_s, d$ and post-change vectors as $p'_s, r'_s, d'$. 

As $d'(u)$ is involved in $p_s(u) / d'(u)$, one need to have $p'_s(u)$ such that $ p_s(u) / d(u) = p'_s(u) / d'(u)$, i.e. the invariant maintenance of Equ. \eqref{eq:u-invaraint-appendix}. The updated amount of weight is $\delta w_(u,v)$, which indicates $p'_s(u) =( d(u)+\Delta w_{(u,v)} ) p_s(u) / d(u)$. So, we have
\begin{align}
p'_s(u) = p_s(u) \frac{\sum_{v \in \operatorname{Nei}(u)} w_{(u,v)} + \Delta w_{(u,v)}}{\sum_{v \in \operatorname{Nei}(u)} w_{(u,v)}}. \label{eq:p-update-weighted-appendix}
\end{align}
Hence, we reach
Equ. \ref{eq:p-update-weighted-appendix} implies that we should scale up $p_s(u)$ to recover balance. This strategy is the general case of \citet{zhang2016approximate} for unweighted graphs where $\Delta w_{(u,v)} = 1$.

However, once we assign $p'_s(u)$, it breaks the invariant for $u$ since $p_s(u)+\alpha r_s(u) \neq p'_s(u)+\alpha r_s(u)$. Likewise, we maintain this equality by introducing $r'_s(u)$:
\begin{align}
p_s(u) + \alpha r_s(u) & = p_s'(u) + \alpha r_s'(u) \nonumber\\
&\text{Substitute $p'_s(u)$ from eq.\ref{eq:p-update-weighted-body}} \nonumber \allowdisplaybreaks\\
& = p_s(u) + \frac{\Delta w_{(u,v)} p_s(u)}{d(u)} + \alpha r_s'(u) , \nonumber \allowdisplaybreaks \\
\Longrightarrow \alpha r_s'(u) - \alpha r_s(u) &= -\frac{\Delta w_{(u,v)} p_s(u)}{d(u)},\nonumber \allowdisplaybreaks\\
r_s'(u) &=  r_s(u) -\frac{\Delta w_{(u,v)} p_s(u)}{\alpha d(u)} 
\label{eq:r-update-weighted-appendix}
\end{align}

Equ. \eqref{eq:r-update-weighted-appendix} implies that we should decrease $r_s(u)$. From a heuristic perspective, it's consistent to the behavior of BCA-algorithm which keeps taking mass from $r_s$ to $p_s$. In this updating case, we artificially create mass for $p'_s$ at the expense of $r_s$'s decrements. When $\Delta w_{(u,v)} = 1$ in case of unweighted graphs, this update rule is also equivalent to \citet{zhang2016approximate}.

So far, we have the updated $p'_s(u)$ and $r'_s(u)$ so that all nodes (without direct edge update, except for $v$) keep the invariant hold. However, due to the introduction of $\Delta w_{(u,v)}$, the shares of mass pushed to $v$ is changes, breaking the invariant for $v$ as shown below:
\begin{align}
    p_s(v) + \alpha r_s(v) & \neq (1-\alpha) \Bigl( \frac{ (w_{(u,v)} + \Delta w_{(u,v)})  p'_s(u)}{d'(u)} \Bigr. \nonumber\\
    & \Bigl. + \sum_{x \in N^{in}(v) \setminus \{ u \} }  \frac{w_{(u,x)} p_s(x)}{d(x)} \Bigr) +\alpha \times 1_{t=s} \nonumber
\end{align}

In order to recover the balance with minimal effort, we should update $r_s(v)$ instead of $p_s(v)$. The main reason is that any change in $p_s(v)$ will break the balance for $v$'s neighbors, similar to the breaks incurred by the change of $p_s(u)$. We present the updated $r'_s(v) = r_s(v)+\Delta $ as following:
\begin{align}
p_s(v) + \alpha \Bigl(r_s(v)+\Delta \Bigr) = (1-\alpha) \Bigl( \frac{\Delta w_{(u,v)} p'_s(u)}{d'(u)} + \Bigr. \nonumber\\
\Bigl. \sum_{x \in N^{in}(v) }  \frac{w_{(u,x)} p_s(x)}{d(x)} \Bigr) +\alpha 1_{t=s}  \label{eq:invar-v-body}.    
\end{align}
Note that $p'_s(u) / d'(u) = p_s(u) / d(u)$, and
\[
p_s(v) + \alpha r_s(v) = (1-\alpha)\sum_{x \in N^{in}(v)}  \frac{ w_{(x,v)} p_s(x)}{d(x)}.
\]
Reorganize Equ. \eqref{eq:invar-v-body} and cancel out $p_s(v), r_s(v)$, we have
\begin{align}
\Delta &= \frac{(1-\alpha)}{\alpha} \frac{ \Delta_{w_{(u,v)}} p_s(u)}{d(u)} \nonumber \allowdisplaybreaks\\
r_s'(v)  &= r_s(v) + \frac{(1-\alpha)}{\alpha} \frac{\Delta_{w_{(u,v)}}p_s(u)}{d(u)} \label{eq:r-v-update-weighted-body-appendix}
\end{align}
Combining Equ.  \eqref{eq:p-update-weighted-appendix},\eqref{eq:r-update-weighted-appendix}, and \eqref{eq:r-v-update-weighted-body-appendix}, we prove the theorem.
\end{proof}
\begin{remark}
The above proof mainly follows from \cite{zhang2016approximate} where unweighted graph is considered while we consider weighted graph in our problem setting.
\end{remark}

\section{Proof of Thm. \ref{thm:time-complexity}}

Before we proof the theorem, we present the known time complexity of \textsc{IncrementPush} as in the following lemma.
\begin{lemma}[Time complexity of \textsc{IncrementPush} \cite{xingzhi2021subset}]
Suppose the teleport parameter of obtaining PPR is $\alpha$ and the precision parameter is $\epsilon$. Given current weighted graph snapshot $\mathcal{G}_t$ and a set of edge events $\Delta E_t$ with $|\Delta E_t| = m$ , the time complexity of \textsc{IncrementPush} is $\mathcal{O}(m/\alpha^2 + \bar{d}^t / (\epsilon \alpha^2) +  1/(\epsilon \alpha))$ where $\bar{d}^t$ is the average node degree of current graph $\mathcal{G}_t$.
\label{lemma:run-time-increment-push}
\end{lemma}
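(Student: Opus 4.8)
The plan is to prove the bound by an amortized charging argument on the \textsc{DynamicForwardPush} iterations invoked inside \textsc{IncrementPush}, decomposing the total time into three pieces that match the three summands: the one-time initialization push (giving $1/(\epsilon\alpha)$), the constant-time bookkeeping performed for each of the $m$ edge events (giving $O(m)$, absorbed into $m/\alpha^2$), and the re-push work needed to restore the quality guarantee after the Thm.~\ref{theorem:dyn-adjust-weighted-graph} adjustments (giving $m/\alpha^2$ and $\bar{d}^t/(\epsilon\alpha^2)$). First I would fix the per-push cost model: a single \textsc{Push}$(u)$ touches only the out-neighbors of $u$, costing $\Theta(|\operatorname{Nei}(u)|)$, and since edge weights are at least one we have $|\operatorname{Nei}(u)| \le d(u)$, so each push costs $O(d(u))$ in the generalized weighted degree.

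Next I would set up the potential $\Phi = \|\bm r_s\|_1$ and use that a push fires only when $|r_s(u)| > \epsilon d(u)$. A push at $u$ moves $\alpha r_s(u)$ into $p_s(u)$ and redistributes the remaining $(1-\alpha)r_s(u)$ proportionally to the out-edges, so it decreases $\Phi$ by at least $\alpha|r_s(u)| > \alpha\epsilon d(u)$. Dividing cost by progress, every push has cost-to-progress ratio at most $1/(\alpha\epsilon)$; hence the total time spent pushing is at most $\tfrac{1}{\alpha\epsilon}$ times the total increase of $\Phi$ over the whole execution. Since forward push never raises $\Phi$, its only increases come from (i) the initialization $\bm r_s = \bm 1_s$ and (ii) the residual injected by the update rules of Thm.~\ref{theorem:dyn-adjust-weighted-graph} at each edge event.

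I would then account these increases separately. The initialization contributes $\|\bm 1_s\|_1 = 1$, giving push time $O(1/(\epsilon\alpha))$, the third term. Each edge event $(u,v,\Delta w_{(u,v)})$ first performs the three $O(1)$ updates of $p_s(u), r_s(u), r_s(v)$ (total $O(m)$ across the batch, dominated by $m/\alpha^2$) and then raises $\Phi$ by $|r_s'(u)-r_s(u)| + |r_s'(v)-r_s(v)| = \tfrac{(2-\alpha)\,\Delta w_{(u,v)}\,p_s(u)}{\alpha\, d(u)}$, read directly off Eq.~\eqref{eq:r-update-weighted-body} and Eq.~\eqref{eq:r-v-update-weighted-body} (the same magnitude appears for deletions via absolute values). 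Summing this injection over the $m$ events and multiplying by the $1/(\alpha\epsilon)$ ratio yields a term of the form $\tfrac{1}{\alpha^2\epsilon}\sum_{\text{events}}\tfrac{\Delta w_{(u,v)}p_s(u)}{d(u)}$, so the remaining task is to show this sum is $O(m\epsilon + \bar{d}^t)$, which collapses it into the claimed $m/\alpha^2 + \bar{d}^t/(\epsilon\alpha^2)$.

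The main obstacle is exactly that last sum. The crude bounds $p_s(u)\le\pi_s(u)\le 1$ (from the first identity of Lemma~\ref{lemma:ppr-invar}) and $\Delta w_{(u,v)}/d(u)\le 1$ only give $O(m/(\alpha^2\epsilon))$, too weak by a factor of $1/\epsilon$. To sharpen it I would use that immediately before each event the residuals obey the post-push quality $|r_s(u)|\le\epsilon d(u)$ together with the second identity of Lemma~\ref{lemma:ppr-invar}, which pins $p_s(u)/d(u)$ to a combination of an $\epsilon$-proportional part (the residual slack carried over from the previous push phase, contributing the $\epsilon$-free $m/\alpha^2$) and a structural part that, amortized over the current snapshot, scales with $\bar{d}^t$ (contributing $\bar{d}^t/(\epsilon\alpha^2)$). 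I expect the delicate accounting to lie precisely in separating these two contributions in the weighted setting and in checking that the $|\operatorname{Nei}(u)|\le d(u)$ step is invoked only where edge weights are bounded below by one; once those are in place the argument follows the unweighted analysis of \cite{xingzhi2021subset} with the generalized degree $d(u)=\sum_{x}w_{(u,x)}$ substituted for the combinatorial degree.
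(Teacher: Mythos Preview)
The paper does not prove this lemma; it states it as a known result imported from \cite{xingzhi2021subset} (the sentence preceding the lemma reads ``we present the \emph{known} time complexity of \textsc{IncrementPush} as in the following lemma'') and then invokes it as a black box inside the proof of Theorem~\ref{thm:time-complexity}. So there is no in-paper argument to compare against.

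On its own merits, your potential argument with $\Phi=\|\bm r_s\|_1$ is the standard framework and is set up correctly: a push at $u$ costs $|\operatorname{Nei}(u)|$ and lowers $\Phi$ by at least $\alpha|r_s(u)|>\alpha\epsilon\, d(u)$ (the triangle inequality handles the signed-residual case), so under your weights-at-least-one assumption the cost-to-progress ratio is at most $1/(\alpha\epsilon)$, and the initialization term $1/(\epsilon\alpha)$ drops out immediately. Your reading of the residual mass injected by a single edge event from Eq.~\eqref{eq:r-update-weighted-body}--\eqref{eq:r-v-update-weighted-body} is also correct.

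The genuine gap is exactly where you flag it. Everything is reduced to showing
\[
\sum_{\text{events}} \frac{\Delta w_{(u,v)}\, p_s(u)}{d(u)} \;=\; O\bigl(m\epsilon+\bar d^{\,t}\bigr),
\]
and this is not established. Invoking ``the second identity of Lemma~\ref{lemma:ppr-invar}'' does not give the split you describe: that identity expresses $p_s(u)$ in terms of the in-neighbors' ratios $p_s(x)/d(x)$, which is circular rather than a bound, and there is no evident mechanism by which a ``structural part'' emerges proportional to $\bar d^{\,t}$. Your closing sentence ultimately defers the step to \cite{xingzhi2021subset}, which is precisely what the paper itself does---so as a proof of the lemma, the proposal is incomplete in the same place the paper is, just with more of the surrounding scaffolding spelled out.

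One smaller caveat: the inequality $|\operatorname{Nei}(u)|\le d(u)$ that your cost-to-progress bound relies on is not an assumption stated anywhere in the paper. It holds in the intended multigraph interpretation (integer edge multiplicities), but for arbitrary positive real weights the ratio $|\operatorname{Nei}(u)|/d(u)$ is unbounded and the $1/(\alpha\epsilon)$ bound fails; if you keep this route you should make that hypothesis explicit.
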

\begin{proof}
The total time complexity of our instantiation \textsc{DynAnom} has components, which correspond to three steps in Algo. \ref{algo:dynanom-node}. The run time of step 1, \textsc{IncrementPush} is $\mathcal{O}(m/\alpha^2 + \bar{d}^t / (\epsilon \alpha^2) + 1/(\epsilon \alpha))$ by Lemma \ref{lemma:run-time-increment-push}. The run time of step 2, \textsc{DynNodeRep} is bounded by $\mathcal{O}(n)$ as the component of \textsc{DynNodeRep} is the procedure of the dimension reduction by using two hash functions. The calculation of two has functions given $\bm p_s$ is linear on $|\operatorname{supp}(\bm p_s)|$, which is $|\operatorname{supp}(\bm p_s)| \leq n$. Finally, the instantiation of our score function is also linear on $n$. Therefore, the whole time complexity is dominated by $\mathcal{O}(k m/\alpha^2 + k \bar{d^t} / (\epsilon \alpha^2) +  k /(\epsilon \alpha) + k T s)$ where $s$ is the maximal allowed sparsity defined in the theorem. We proof the theorem.
\end{proof}
 
\section{Timelines of real world graphs}
\label{sec:appendix-timelines}
We list the real-world events for the ENRON and PERSON graph in table \ref{tab:enron-events} and \ref{tab:person-events}.

\begin{table}[h!]
    \centering
    \caption{The events in Enron scandal timeline}

    \small
\begin{tabular}{p{0.05\linewidth} p{0.15\linewidth} p{0.65\linewidth} }
\toprule
 Index &       Date &  Event Description \\
\midrule
    1 & 2000/08/23 &           Stock hits all-time high of \$90.56. the Federal Energy Regulatory Commission orders an investigation. \\

     2 & 2000/11/01 &                                                                                                                                                         FERC investigation exonerates Enron for any wrongdoing in California. \\
     3 & 2000/12/13 &                                                               Enron announces that president and chief operating officer Jeffrey Skilling will take over as chief executive in February. Kenneth Lay will remain as chairman. \\
     4 & 2001/01/25 &                                                                   Analyst Conference in Houston, Texas. Skilling bullish on the company. Analysts are all convinced. \\
     5 & 2001/05/17 &                                                                                                                                                     "Secret" meeting at Peninsula Hotel in LA -- Schwarzenegger, Lay, Milken. \\
     6 & 2001/08/22 &                                                                                                                   Ms Watkins meets with Lay and gives him a letter in which she says that Enron might be an "elaborate hoax." \\
     7 & 2001/09/26 &                                                                                                              Employee Meeting. Lay tells employees: Enron stock is an "incredible bargain." "Third quarter is looking great." \\
     8 & 2001/10/22 &                                                                   Enron acknowledges Securities and Exchange Commission inquiry into a possible conflict of interest related to the company's dealings with the partnerships. \\
     9 & 2001/11/08 & Enron files documents with SEC revising its financial statements to account for \$586 million in losses. The company starts negotiations to sell itself to head off bankrutcy. \\
    10 & 2002/01/30 &                                                                                                                                                                 Stephen Cooper takes over as Enron CEO. \\
\bottomrule
\end{tabular}
    \label{tab:enron-events}
\end{table}

\begin{table}[h!]
    \centering
    \small
    \caption{The person events of \textit{Schwarzenegger} and \textit{Franken}}
\begin{tabular}{p{0.1\linewidth} p{0.77\linewidth}}
\toprule
 Year & Major Event \\
\midrule 
 \multicolumn{2}{c}{Arnold Schwarzenegger} \\
\midrule 
 2003 & Arnold Schwarzenegger won the California gubernatorial election as his first politician role\\
 2008 & Arnold began campaigning with McCain as the key endorsement for McCain's presidential campaign\\
 2010 & mid-term election\\
 2011 & Arnold reached his term limit as Governor and returned to acting\\
\midrule 
 \multicolumn{2}{c}{Al Franken} \\
\midrule 
 2002 & Al Franken consider his first race for office due to the tragedy of Minnesota Sen. Paul Wellstone. \\
  2004 & The Al Franken Show aired \\
 2007 & Al Franken announced for candidacy for Senate. \\
 2008 & Al Franken won election. \\
 2012 & Al Franken won re-election. \\
 2018 & Al Franken resigned \\
 \bottomrule
\end{tabular}
    \label{tab:person-events}
\end{table}

\section{Hyper-parameter Settings}
we (re)implement the algorithms in Python to measure comparable running time.
\label{sec:appendix-param-config}
We list the hyper-parameter in Table \ref{tab:param-config}.
\begin{table}[h]
    \small
    \centering
\caption{The hyper-parameter configurations of algorithms in experiments. }
\begin{tabular}{p{0.20\linewidth}p{0.65\linewidth}}
\toprule 
 Algorithm & Hyper-parameter Settings \\
\midrule 
 AnomRank & alpha = 0.5 (default ) , \ epsilon = 1e-2 (default ) or 1e-6 (for better accuracy)  \\
\hline 
 SedanSpot & sample-size = 500 (default settings) \\
\hline 
 NetWalk & epoch\_per\_update=10, dim=128, default settings for the rest parameters \\
\hline 
 DynAnom \& DynPPE & For exp1 and exp2: alpha = 0.15, minimum epsilon = 1e-12 , dim=1024; For exp2: we keep track of the top-100 high degree nodes for graph-level anomaly.  For case studies: alpha = 0.7, the rest are the same. \\
 \bottomrule
\end{tabular}
\label{tab:param-config}
\end{table}

\section{Experiment Details}
\label{sec:appendix-data}

\paragraph{Infrastructure}: We conduct our experiment on machine with 4-core Intel i5-6500 3.20GHz CPU, 32 GB memory, and GeForce GTX 1070 GPU (8 GB memory) on Ubuntu 18.04.6 LTS.

\paragraph{Datasets:}
For DARPA dataset: we totally track 200 anomalous nodes, and 151 of them have at least one anomalous edge after initial snapshot, the ground-truth of node-level anomaly is derived from the annotated edge as aforementioned. 
For EU-CORE dataset: Since EU-CORE dataset does not have annotated anomalous edges,  we randomly select 20 snapshots to inject artificial anomalous edges using two popular injection methods,  which are similar to the approach used in \cite{yoon2019fast}. For each selected snapshot in \textit{EU-CORE-S}, we uniformly select one node,$u_{high}$,  from the top 1\% high degree nodes, and injected 70 multi-edges connecting the selected node to other 10 random nodes which were not connected to $u_{high}$ before, which simulates the structural changes.  In each selected snapshot in \textit{EU-CORE-L}, we uniformly select 5 pairs of nodes, and injected edge connect each pair with totally 70 multi-edges as anomaly, which simulates the sudden peer-to-peer communication.  We include all datasets in our supplementary materials.

\end{document}